\newtheorem{theorem}{Theorem}
\newtheorem{definition}{Definition}
\newtheorem{lemma}{Lemma}
\newtheorem{corollary}{Corollary}
\begin{document}

\title{Quantum polar stabilizer codes based on polarization of pure quantum channel don't work for quantum computing}

\author{Zhengzhong Yi}
\affiliation{Harbin Institute of Technology, Shenzhen, China}
%\orcid{0000-0002-2445-2701}
\author{Zhipeng Liang}
\affiliation{Harbin Institute of Technology, Shenzhen, China}
\author{Yulin Wu}
\affiliation{Harbin Institute of Technology, Shenzhen, China}
\author{Xuan Wang}
\affiliation{Harbin Institute of Technology, Shenzhen, China}
\email{wangxuan@cs.hitsz.edu.cn}

\maketitle

\begin{abstract}
 Inspired by classical polar codes, whose coding rate can asymptotically achieve the Shannon capacity, researchers are trying to find its analogue in quantum information field, which are called quantum polar codes. However, no one has designed a quantum polar coding scheme which applies to quantum computing yet. There are two intuitions in previous research. The first is that directly converting classical polar coding circuits to quantum ones will produce polarization phenomenon of pure quantum channel, which has been proved in our previous work. The second is that based on this quantum polarization phenomenon one can design a quantum polar coding scheme that applies to quantum computing. There are several previous work following the second intuition, none of which has been verified by experiments. In this paper, we follow the second intuition and propose a more reasonable quantum polar stabilizer code construction algorithm than any previous ones by using the theory of stabilizer codes. Unfortunately, simulation experiments show that even the stabilizer codes obtained from this more reasonable construction algorithm don't work, which implies that the second intuition leads to a dead end.  Based on the analysis on why the second intuition don't work, we provide a possible future direction of designing quantum stabilizer codes with high coding rate by borrowing the idea of classical polar codes. following this direction, we find a class of quantum stabilizer codes with coding rate 0.5 for pure Pauli X, Z and Y noise.
\end{abstract}

\section{Introduction}
Due to the great potential to solve certain problems which cannot be solved by classical computer with reasonable resources and time, quantum computing has attracted worldwide attention in the past twenty years.  Nowadays, researchers have studied many physical systems which can be used to realize quantum computing, including superconducting circuits\cite{arute2019quantum,wu2021strong,gong2021quantum}, trapped ions\cite{bruzewicz2019trapped}, and quantum dots\cite{hendrickx2021four}. However, limited by the hardware, none of these systems has realized reliable large-scale quantum computing, since qubits in these physical systems are easily to be influenced by the environment, which makes the state of qubits and hence the quantum information carried by them prone to error. Fortunately, quantum error correcting (QEC) proposed by Shor and Steane\cite{PhysRevA.52.R2493,PhysRevLett.77.793} provide us with a solution to this problem.

Though quantum error correcting codes (QECCs) provide us a promising approach for fault-tolerant large-scale quantum computing, the high qubit overhead results from low coding rate of most QECCs challenges current hardware level. Similar to classical error correcting codes (CECCs), QECCs encode $n$ (which is called code length) less reliable physical qubits (with error rate $p$)  to obtain $k$ more reliable logical qubits (with error rate $p_{L}<p$ after decoding and recovery). The ratio $k/n$ is called coding rate. Improving coding rate is beneficial to reducing qubit overhead. 

No matter for CECCs or QECCs, in general, we often increase the code length $n$  to improve the reliability of the logical bits/qubits. Good CECCs have constant or even asymptotically increasing coding rate with the increase of code length $n$\cite{1057683,748992,mackay1996near,5075875}. For instance, experimental results show that the coding rate of classical low-density parity check (CLDPC) codes\cite{1057683,748992,mackay1996near} seems to achieve Shannon capacity. Another example is that Arikan has proved the coding rate of classical polar code\cite{5075875} can asymptotically achieve the Shannon capacity.

These CECCs with high coding rate have inspired researchers to look for efficient QECCs. For example, inspired by CLDPCs, researchers proposed quantum low-density parity check (QLDPC) codes  whose coding rate can maintain constant\cite{gottesman2013fault,tillich2013quantum,freedman2013quantum,guth2014quantum,hastings2013decoding,breuckmann2016constructions,breuckmann2017hyperbolic,breuckmann2021single,grospellier2021combining}. However, the relationship between the coding rate of QLPDCs and quantum channel capacity is still not clear. If we take maximum single letter coherent information (MSLCI), which will be explained in Section \ref{2.2}, as quantum channel capacity, the asymptotic coding rate of certain QLDPCs seems to rather lower than quantum channel capacity\cite{yi2023channel}. Another example of efficient QECCs research inspired by CECCs is quantum polar codes inspired by classical polar codes\cite{5075875}. In previous work\cite{guo2013polar,6284203,6302198,renes2012efficient,wilde2013polar,renes2015efficient,hirche2015polar,7370934,8815775,goswami2021quantum}, some quantum polar coding schemes have been proposed. However, either they are based on classical-quantum channel\cite{guo2013polar,6284203,6302198,wilde2013polar,hirche2015polar,7370934,goswami2021quantum}, or they ask for destructive measurements\cite{renes2012efficient,renes2015efficient} which refers to the measurements that destruct the quantum information carried by quantum states. These makes them don’t apply to quantum computing where the computing channel is pure quantum channel and destructive measurement should be avoided.

There are two intuitions in current research on quantum polar coding\cite{guo2013polar,6284203,6302198,renes2012efficient,wilde2013polar,renes2015efficient,hirche2015polar,7370934,8815775,goswami2021quantum,8989387,9366784}. The first is that directly converting the classical polar coding circuits into quantum ones will lead to polarization phenomenon of quantum channels. The second is that based on polarization phenomenon of quantum channels arises from directly converting the classical polar coding circuits into quantum ones, one can design a quantum polar coding scheme for quantum computing. In our previous work\cite{yi2023channel}, we have proved the first intuition in two-dimensional-input quantum symmetric channels (QSCs). There are several previous work following the second intuition\cite{guo2013polar,renes2012efficient,renes2015efficient,8989387,9366784}. However, none of them is a quantum stabilizer code for quantum computing in a real sense except the one in Ref. \cite{9366784} — either they are based on classical-quantum channels\cite{guo2013polar} or they ask for destructive measurement\cite{renes2012efficient,renes2015efficient}. In Ref. \cite{9366784}, the authors give a method to construct the so called "entanglement assisted quantum polar stabilizer codes". However, there is no experiment to support the validity of these codes. In this paper, we follow the second intuition and propose two construction algorithm called coherent-information-achieving (CA) algorithm and block-selection (BS) algorithm. CA is the same as the construction in Ref. \cite{9366784}, while we believe BS is more reasonable than CA because it's designed by using the theory of stabilizer codes. By saying "using the theory of stabilizer codes", we mean that the more reasonable construction has considered the weight of logical operators and stabilizers.  Unfortunately, simulation experiments show that even the stabilizer codes obtained from this more reasonable construction algorithms called BS algorithm don't work, which implies that the second intuition leads to a dead end.

Quantum stabilizer codes are an important class of QECCs. They avoid destructive measurements in error symdrome detection by measuring stabilizers. Different from the decoders of CECCs, which decode the value of the received bits into the value of bits before encoding, the decoders of quantum stabilizer codes decode the error syndromes which are composed of the measurement result of stabilizers into the errors happen to each qubit. Quantum stablizer codes can be represented by their stabilizer generators. Determining a stabilizer generators construction method is equivalent to determing a kind of quantum stabilizer codes. According to code construction of classical polar code and the ranking of quantum coordinate channels, we propose two algorithms for constructing stabilizer generators of QPSCs. The first one ranks the quantum coordinate channels by the Bhattacharyya parameters. Since it allows the coding rate achieving the channel capacity if we take MSLCI as quantum channel capacity, we name it coherent-information--achieving (CA) construction algorithm. The second one ranks the quantum coordinate channels by the weight of their corresponding stabilizer generators. Since this algorithm rank the coordinate channels in a block-by-block way, we name it block-selection (BS) construction algorithm. 

In classical polar codes, the frozen bits are frozen to a fixed value such as zero before encoding, which is a priori knowledge for decoders. In decoding, decoders will use this priori knowledge and decode all frozen bits into the fixed value. Hence, in the whole encoding-transmitting-decoding process, the frozen bits seem to suffer no error. This point is crucial to guarantee the reliability of polar codes, since the decoding of frozen bits will influence the decoding of information bits. For QPSCs, there are two kinds of assumptions about the frozen qubits -- unreliable frozen qubits assumption and reliable frozen qubits assumption. The unreliable frozen qubits assumption assumes that the frozen qubits suffer from error as the infromation qubits does, which is the common assumption in error correction of quantum computing, while the reliable frozen qubits assumption assumes that the frozen qubits don't suffer from error, which might be too strong for quantum computing. 

To test the error correcting capability of QPSCs under these two construction algorithms, we perform simulation in bit-flip channel under the above two assumptions, respectively. The simulation results show that, under both assumptions, for both CA and BS construction algorithm, in the range of $p$ from $1 \times 10^{-5}$ to $1 \times 10^{-2}$, no evidence shows that the logical error rate (LER) of single logical qubit $LER_{lq}$ can be decreased by increasing code length and we cannot find the noise threshold. That is to say, these QPSCs don't work for quantum computing. Therefore, we come to a conclusion that the second intuition is too naive to guide us to design quantum polar coding scheme for quantum computing. 

The rest of paper is organized as follows. Some preliminaries are introduced in Section \ref{2}, including stabilizer codes, quantum channel and its capacity, classical polar code and its code construction, and polarization of two-dimensional-input quantum symmetric channels. In Section \ref{3} we will reveal the relationship between the channel quality of quantum coordinate channels and the weight of stabilizer generators and logical operators of QPSCs. In Section \ref{4}, we give out CA and BS construction algorithms of QPSCs. In Section \ref{5}, simulation results are given and analyzed. In Section \ref{6}, we provide a possible future direction of designing quantum stabilizer does with high coding rate by borrowing the idea of classical polar codes. We conclude our work in Section \ref{7}.

\section{Preliminaries}
\label{2}
\subsection{Stabilizer codes}
\label{2.1}

This section briefly introduces quantum stabilizer codes. Quantum stabilizer codes are an important class of QECCs, whose code construction is analogous to classical linear codes\cite{nielsen2002quantum}.

In quantum information theory, errors acting on a single qubit are generally modeled as elements of the Pauli group denoted by $\mathcal{G}_{1}$, which consists of all Pauli operators together with multiplicative constants $\{\pm1,\pm i\}$, namely,
\begin{equation}
	\mathcal{G}_{1}=\left\{\pm I,\pm iI,\pm X,\pm iX,\pm Y,\pm iY,\pm Z,\pm iZ\right\}
\end{equation}
The general Pauli group $\mathcal{G}_{n}$ on $n$ qubits is the $n$-fold tensor product of $\mathcal{G}_{1}$, namely, $\mathcal{G}_{n}=\mathcal{G}_{1}^{\otimes n}$. Here, we use notation $w\left(\cdot\right)$ to denote the weight of an operator $E\in\mathcal{G}_{n}$, namely, the number of qubits on which it acts nontrivially (with a Pauli operator other than the identity $I$). For instance, $w\left(X_1X_2I_3\right)=2$.

A quantum stabilizer code $C$ encodes $k$ logical qubits into $n$ physical qubits. Denote the code space by $\mathcal{Q}_{C}$ which is a subspace of ${(\mathbb{C}^{2})}^{\otimes n}$. The code space $\mathcal{Q}_{C}$ is stabilized by a stabilizer group $\mathcal{S}$ which is a subgroup  of $\mathcal{G}_{n}$. Giving the stabilizer group $\mathcal{S}$ of a code C is equivalent to explicitly giving its code space $\mathcal{Q}_C$. Using the stabilizer group $\mathcal{S}$, the code space $\mathcal{Q}_C$ can be defined as
\begin{equation}
	\begin{aligned}
		\mathcal{Q}_C=\{\ket{\psi}\in\left(\mathbb{C}^2\right)^{\otimes n}:\ S\ket{\psi}=\ket{\psi},\forall\ S\in\mathcal{S}\}
	\end{aligned}
\end{equation}

A group $G$ can be generated by a set of independent elements $g_1,\cdots,g_l$, which means each element in $G$ can be written as a product of elements from the list $g_{1},\cdots,g_l$. We say $g_1,\cdots,g_l$ are generators of group $G$, and we write $G=\langle g_1,\cdots,g_l\rangle$.

Similar to classical linear codes, a stabilizer code $C$ can be expressed as a 3-tuple $\left[ n, k, d \right]$, which means it encodes $k$ logical qubits into $n$ physical qubits and its code distance is $d$. For an $\left[ n, k, d \right]$ code $C$, its stabilizer group $\mathcal{S}$ has $n-k$ independent generators, namely, $\mathcal{S}=\langle S_1,\cdots,S_{n-k}\rangle$. Besides, we can also find another subgroup $\mathcal{L}$ of $\mathcal{G}_n$ that has $k$ pairs of generators, namely,$\mathcal{L}=\langle \overline{X}_1,\overline{Z}_1,\cdots,\overline{X}_k,\overline{Z}_k\rangle$, whose elements commute with all elements in $\mathcal{S}$\cite{RevModPhys.87.307}. The elements in group $\mathcal{L}$ are called the logical operators of code $C$. Code distance $d$ is defined as the minimum weight of the logical operators, namely,
\begin{equation}
	\begin{aligned}
		d=\min\limits_{P\in\mathcal{L}}{w(P)}
	\end{aligned}
\end{equation}

For a $\left[n, k\right]$ stabilizer code $\mathcal{C}$, its encoding process can be depicted by a unitary operator $U_{enc}$ acting on $n$ qubits. These $n$ qubits can be divided into two different sets. The first set consists of $k$ \emph{data qubits} (or called logical qubits) which contains information to be encoded, and the second set consists of $n-k$ \emph{ancilla qubits} which are set to a fixed state such as $\ket{0}^{\otimes n-k}$. The Pauli operators acting on $k$ logical qubits before encoding will be transformed to $k$ sets of logical operators by the encoding operator $U_{enc}$ after encoding, namely, $\overline{L}_i=U_{enc}P_iU_{enc}^\dagger$, $1\le i\le k$. The Pauli operators acting on $n-k$ ancilla qubits are transformed to stabilizer generators by the encoding operator $U_{enc}$ after encoding, namely, $S_i=U_{enc}P_i U_{enc}^\dagger$, $k+1\le i\le n$.

\subsection{Pure quantum channel and its capacity}
\label{2.2}
\subsubsection{Pure quantum channel}
\label{2.2.1}

If the inputs and outputs of a channel are both arbitrary quantum states, it is a pure quantum channel. The behavior of a quantum channel $\mathcal{E}$ acting on a quantum system $P$ can be described by a set of operator elements $\{ E_i \}_{i=1,\cdots n}$, which satisfies $\sum_{i=1}^{n}{E_i^\dag E_i}=I$\cite{nielsen2002quantum}. Suppose the state of quantum system $P$ is $\rho^P$, and $P$ is subject to the quantum channel $\mathcal{E}$, which changes system $P$ to $P^\prime$ and maps the state $\rho^P$ to $\rho^{P^\prime}$. Using operator-sum representation\cite{nielsen2002quantum}, the transformation of state $\rho^P$ can be described as
\begin{equation}
	\begin{aligned}
		\rho^{P^\prime}=\mathcal{E}\left(\rho^P\right)=\sum_{i=1}^{n}{E_i\rho^PE_i^\dag}
	\end{aligned}
\end{equation}

Common quantum channels include bit-flip channel, phase-flip channel and depolarizing channel, %Bit-flip channel flips the state of a qubit from $\ket{0}$ to $\ket{1}$ (and vice versa) with error probability $p$, and phase-flip channel flips the state of a qubit from $\ket{+}=\frac{1}{\sqrt{2}}(\ket{0}+\ket{1})$ to $\ket{-}=\frac{1}{\sqrt{2}}(\ket{0}-\ket{1})$ (and vice versa) with error probability $p$. 
whose operator elements are $\{\sqrt p X,\sqrt{1-p}I\}$, $\{\sqrt p Z,\sqrt{1-p}I\}$, and $\{\sqrt{p/3} X,\sqrt{p/3} Y, \sqrt{p/3} Z, \sqrt{1-p}I\}$, respectively. % where $X=\left(\begin{matrix}0&1\\1&0\\\end{matrix}\right)$ is Pauli $X$ operator and $Z=\left(\begin{matrix}1&0\\0&-1\\\end{matrix}\right)$ is Pauli $Z$ operator.

In our previous work\cite{yi2023channel}, we propose a mathematical tool, \textbf{basis transition probability matrix (BTPM)}, which is inspired by transition probability matrix (TPM) of classical channel, to depict the behavior of quantum channels. Besides, we prove that, for a quantum channel $\mathcal{E}$ having BTPM, giving its BTPM is equivalent to giving its operator elements $\{ E_i \}_{i=1,\cdots n}$, which satisfies $\sum_{i=1}^{n}{E_i^\dag E_i}=I$. 

In \cite{yi2023channel} we also define two classes of quantum channel, \textbf{quantum symmetric channel (QSC)} and \textbf{quantum quasi symmetric channel (QQSC)}. For a quantum channel which has BTPM, if each row of the BTPM is a permutation of the first row, then this quantum channel is symmetric with respect to its input. If each column of the BTPM is a permutation of the first column, then this quantum channel is symmetric with respect to its output. If a quantum channel is symmetric with respect to both of its input and output, then this channel is called QSC. If a channel is symmetric with respect to its input but might not to its output, and its BTPM can be divided into several submatrices by column, each of which satisfies that each column of it is a permutation of the first column of it, then this channel is called a QQSC. 

\subsubsection{Quantum channel capacity}
\label{2.2.2}
Coherent information of a quantum channel $\mathcal{E}$, denoted by $I(\rho^P,\mathcal{E})$, is the quantity measuring the amount of quantum information transmitted in the channel\cite{PhysRevA.54.2629},  where  $\rho^P$ is the state of quantum system $P$ before being transformed  by $\mathcal{E}$. Coherent information is believed to be the analogue of classical mutual information in quantum information theory\cite{nielsen2002quantum} and a quantity to measure the channel capacity of pure quantum channels\cite{9241807,8242350,holevo2020quantum,5592851,holevo2010mutual,bennett2004quantum,PhysRevA.57.4153,PhysRevA.55.1613,kretschmann2004tema,shor2003capacities}. However, since the coherent information has superadditivity, it's still not clear how to calculate quantum channel capacity presicely by coherent information. Here, we use maximum single letter coherent information (MSLCI)\cite{yi2023channel} to measure quantum channel capacity, which is believed to be a lower bound of the average coherent information of multiple use of a quantum channel\cite{leditzky2018dephrasure}.

For a classical symmetric channel, Arikan has proved that its symmetric capacity is it Shannon capacity. Similarly, in \cite{yi2023channel} we prove that the MSLCI of a two-dimensional-input QQSC equals to its symmetric coherent information whose definition is as follows.

\begin{definition}[\textbf{Symmetric coherent information}]
	\label{Symmetric coherent information}
	For a quantum channel $\mathcal{E}$, the number of whose input qubits is n, its input state can be represented by $\rho=\sum_{i=1}^{2^{n}} q_{i}\ket{i}\bra{i}$, its symmetric information $I_U$ is defined as the coherent information $I(\rho,\mathcal{E})$ when $q_1=\cdots=q_{2^n}=\frac{1}{2^n}$, namely,
	\begin{equation}
		I_{U} \equiv I\left(\rho=\sum_{i=1}^{2^{n}} \frac{1}{2^{n}}\ket{i}\bra{i}, \mathcal{E}\right)
	\end{equation}
\end{definition}

\subsection{Classical polar code and its code construction}
\label{2.3}
For a binary-input discrete memoryless channel (B-DMC) $W: \mathcal{X}\rightarrow\mathcal{Y}$ with input alphabet $\mathcal{X}$, output alphabet $\mathcal{Y}$ and transition probabilities $W\left(y\middle| x\right)$, $x\in\mathcal{X}=\{0,1\}$, $y\in\mathcal{Y}$, Arikan defines two parameters\cite{5075875}, the symmetric capacity and the Bhattacharyya parameter, to measure its quality. The symmetric capacity of a B-DMC is defined as,
\begin{equation}
	\begin{aligned}
		&I(W)\\
		& \triangleq \sum_{y \in \mathcal{Y}} \sum_{x \in \mathcal{X}} \frac{1}{2} W(y \mid x) \log \frac{W(y \mid x)}{\frac{1}{2} W(y \mid 0)+\frac{1}{2} W(y \mid 1)}
	\end{aligned}
\end{equation}
and its Bhattacharyya parameter is defined as,
\begin{equation}
	Z(W) \triangleq \sum_{y \in \mathcal{Y}} \sqrt{W(y \mid 0) W(y \mid 1)}
\end{equation}

Arikan has proved that the symmetric capacity $I(W)$ is equal to the Shannon capacity when $W$ is a symmetric channel. One can see $Z(W)$ takes values in $\left[0,1\right]$, and $I(W)$ also take values in $\left[0,1\right]$ if we use base-2 logarithms. When $I(W)$ tends to $1$, $Z(W)$ tends to $0$, and when $I(W)$ tends to $0$, $Z(W)$ tends to $1$.

Classical polar coding proposed by Arikan\cite{5075875} is the only coding scheme whose coding rate is proved to achieve the symmetric capacity of any given B-DMC. This scheme is based on the channel polarization which consists of two processes, channel combining and channel splitting.

As shown in Fig. \ref{fig1}, in channel combining, we combine $N$ copies of a given B-DMC $W: \mathcal{X}\rightarrow\mathcal{Y}$ (whose symmetric capacity is $I(W)$) with input alphabet $\mathcal{X}$ and output alphabet $\mathcal{Y}$ in a recursive manner and obtain a combined channel $W_N: \mathcal{X}^N\rightarrow\mathcal{Y}^N $ ($\mathcal{X}^N$ and $\mathcal{Y}^N$ are the $N$-power extension alphabet of $\mathcal{X}$ and $\mathcal{Y}$, respectively) which maps the vector $(x_1,\cdots,x_N)$ to $(y_1,\cdots,y_N)$. Here we will use $x_1^N$ as a shorthand for vector $(x_1,\cdots,x_N)$. In channel splitting, the combined channel $W_N$ is split back into a set of $N$ binary-input coordinate channels $\{{W_N^{(i)}}\}_{1\le i\le N}$. The coordinate channel $W_N^{(i)}$ is define as,
\begin{equation}
	W_{N}^{(i)}: \mathcal{X} \rightarrow \mathcal{Y}^{N} \times \mathcal{X}^{i-1}, 1 \leq i \leq N
\end{equation}

As $N$ becomes large, the symmetric capacity of $N\times I(W)$ coordinate channels will asymptotically tend to 1 and  that of the rest $N\times(1-I\left(W\right))$ coordinate channels will asymptotically tend to 0. It means $N\times I(W)$ coordinate channels are noiseless, which can be used to perfectly transmit information bits, while the rest $N\times(1-I\left(W\right))$ coordinate channels are completely noisy and can be used to transmit frozen bits. After channel polarization, we compute the channel quality of each coordinate channel $W_N^{(i)}$, and rank the coordinate channels in descending order according to their channel quality. Then we choose the former $\left\lfloor N\times I(W)\right\rfloor (\left\lfloor\cdot\right\rfloor$ means the value is rounded down) coordinate channels to transmit information bits, and the rest to transmit frozen bits.
\begin{figure}[htbp]
	\centering
	\includegraphics[width=0.4\textwidth]{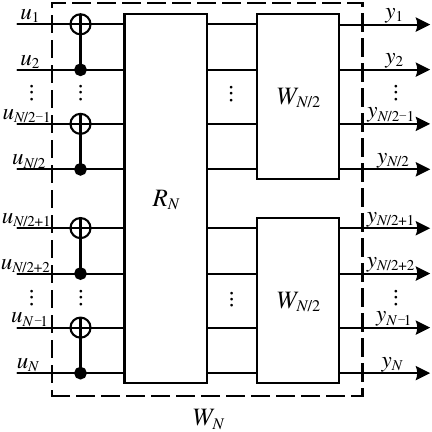}
	\caption{Classical channel polarization. $R_N$ is the reverse shuffle operation. Two independent copies of $W_{N/2}$ are combined to produce the channel $W_N$. Please see Ref. \cite{5075875} for more detail}
	\label{fig1}
\end{figure}

Arikan has proved that\cite{5075875} if the primal channel $W$ is a binary erasure channel (BEC), the channel quality of classical coordinate channels $\{W_N^{\left(i\right)}\}$ can be measured by the Bhattacharyya parameter, and they can be precisely computed through the following recursion,
\begin{equation}
	\begin{aligned}
		\label{BEC recursion}
		Z\left(W_{N}^{(2 j-1)}\right) &=2 Z\left(W_{N / 2}^{(j)}\right)-Z\left(W_{N / 2}^{(j)}\right)^{2} \\
		Z\left(W_{N}^{(2 j)}\right) &=Z\left(W_{N / 2}^{(j)}\right)^{2}
	\end{aligned}
\end{equation}
However, if the primal channel $W$ is a binary symmetric channel (BSC), there is no such a recursion to precisely compute the channel quality of classical coordinate channels $\{W_N^{\left(i\right)}\}$, only the following relationships,
\begin{equation}
	\begin{aligned}
		\label{BSC recursion}
		Z\left(W_{2 N}^{(2 i-1)}\right) & \leq 2 Z\left(W_{N}^{(i)}\right)-Z\left(W_{N}^{(i)}\right)^{2} \\
		Z\left(W_{2 N}^{(2 i)}\right) &=Z\left(W_{N}^{(i)}\right)^{2}
	\end{aligned}
\end{equation}
Since the TPM of $W_N^{(i)}$ is exponential to the code length $N$, precisely computing the Bhattacharyya parameter of these $\{W_N^{\left(i\right)}\}$ is intractable. Fortunately, ref \cite{6899669} have proved that, if the primal channel $W$ is a BSC, we can still use Eq. \eqref{BEC recursion} to approximately estimate the Bhattacharyya parameter of classical coordinate channels $\{{W}_N^{(i)}\}$.

\subsection{Polarization of two-dimensional-input quantum symmetric channels}
\label{2.4}
% In this section, we first recall the polarization of  two-dimensional-input QSCs that have been proved in our previous work\cite{yi2023channel}, then we will reveal the relationship between the channel quality of quantum coordinate channels and the weight of stabilizer generators and logical operators of QPSCs. At last, we prove the weight of stabilizer generators and logical operators polarizes as well.

For  a two-dimensional-input quantum channel $\mathcal{E}$ which has a BTPM, except for its symmetric coherent information, we can also use the Bhattacharyya parameter to measure its channel quality,which is defined as
\begin{equation}
	Z(\mathcal{E}) \triangleq \sum_{\ket{i} \in B_{\text {out }}} \sqrt{\operatorname{Pr}(\ket{i}\mid \ket{0}) \operatorname{Pr}(\ket{i}\mid \ket{1})}
\end{equation}
where $B_{\text {out }}$ is the basis of output space of $\mathcal{E}$, $B_{\text{in}}=\{\ket{0},\ket{1}\}$ is the basis of input space of $\mathcal{E}$, and $\operatorname{Pr}(\ket{i}|\ket{k})$, $\ket{k} \in B_{\text {in }}$,$\ket{i} \in B_{\text {out }}$ are basis transition probabilities. %We emphasize that $\ket{0}$ and $\ket{1}$ are only basis vector of the input space, they are not necessary to be the computational basis vectors $\left(\begin{array}{l}1 \\ 0\end{array}\right)$ and $\left(\begin{array}{l}0 \\ 1\end{array}\right)$.

\subsubsection{Polarization of channel quality of quantum coordinate channels}
\label{2.4.1}
As shown in Fig. \ref{fig2}, quantum channel polarization is similar to classical channel polarization and also consists of two processes -- quantum channel combining and quantum channel splitting\cite{yi2023channel}. %Readers can read \cite{yi2023channel} for more details. 
We only briefly introduce some concept and conclusions here.

Through quantum channel combining, one can combine $N$ copies of primal quantum channel $\mathcal{E}:\rho^Q\rightarrow\rho^Y$ in a recursive way and obtain a quantum combined channel $\mathcal{E}_N:\rho^{Q_1\cdots Q_N}\rightarrow\rho^{Y_1\cdots Y_N}$. The difference is that we replace the XOR gates in classical channel combining by quantum CNOT gates, and use quantum SWAP gates to realize reverse shuffle operator. One can see that this process is realized by converting the classical polar coding circuits to quantum version. In quantum channel splitting process, we split the quantum combined channel $\mathcal{E}_N$ back into $N$ quantum coordinate channels $\mathcal{E}_N^{(i)}:\ \rho^{Q_i}\rightarrow\rho^{Y_1\cdots Y_N,R_1\cdots R_{i-1}}$, where $R_i$ is the reference system of $Q_i$, $1\le i\le N$. Here we use notation $\rho^{Q_1^N}$ to denote $\rho^{Q_1\cdots Q_N}$ which is similar to Arikan’s shorthand for a vector. Hence the quantum combined channel and quantum coordinate channel can be rewritten as $\mathcal{E}_N:\rho^{Q_1^N}\rightarrow\rho^{Y_1^N}$ and  $\mathcal{E}_N^{(i)}:\ \rho^{Q_i}\rightarrow\rho^{Y_1^NR_1^{i-1}}$, respectively. 

It has been proved that  if the primal quantum channel $\mathcal{E}$ is a two-dimensional-input QSC with two-dimensional output, the quantum combined channel $\mathcal{E}_N$ is a QSC, and the quantum coordinate channels $\{{\mathcal{E}}_N^{(i)}\}$ are all two-dimensional-input QQSCs\cite{yi2023channel}. Besides, if the BTPM of the primal QSC $\mathcal{E}$ and the TPM of classical primal BSC $W$ are the same, the MSLCI $I(\rho^{Q_i},\mathcal{E}_N^{(i)})$ of the quantum coordinate channel $\mathcal{E}_N^{(i)}$ equals to the Shannon capacity $I(W_N^{(i)})$ of the classical coordinate channel $W_N^{(i)}$. Since classical coordinate channels $\{W_N^{\left(i\right)}\}$ polarize, quantum coordinate channels $\{\mathcal{E}_N^{(i)}\}$ polarize as well.

\begin{figure*}[htbp]
	\centering
	\includegraphics[width=0.8\textwidth]{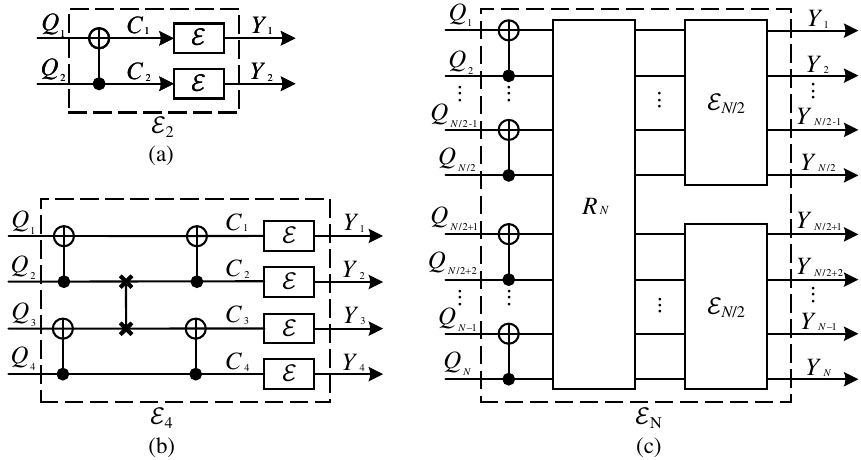}
	\caption{Quantum channel polarization circuits. (a)Two primal channel $\mathcal{E}$ combines to form channel $\mathcal{E}_2$. (b) Two $\mathcal{E}_2$ combines to form channel $\mathcal{E}_4$. (c)Two $\mathcal{E}_{N/2}$ combines to form channel $\mathcal{E}_N$, $R_N$ is the reverse shuffle operator[6].}
	\label{fig2}
\end{figure*}

In this paper, the quantum channel $\mathcal{E}$ that we consider is bit-flip channel, namely, only Pauli X error happens. The reason for this is that correcting Pauli $X$ error is the prerequisite to correct arbitrary Pauli errors. If a quantum stabilizer code doesn't work in bit-flip channel, then it will not work in a channel where arbitrary Pauli errors might happen. 

The BTPM of bit-flip channel is the same as the TPM of BSC $W$. Hence, we have $I\left(\rho^{Q_i},\mathcal{E}_N^{\left(i\right)}\right)=\ I(W_N^{(i)})$ and $Z(\mathcal{E}_N^{\left(i\right)})={Z(W}_N^{\left(i\right)})$. Same as the approximation method of Bhattacharyya parameter of classicl channels mentioned in Section 2.3, we can approximately estimate the Bhattacharyya parameter of quantum coordinate channels $\{\mathcal{E}_N^{\left(i\right)}\}$ by the following recursion,
\begin{equation}
	\begin{aligned}
		\label{QQCC recursion}
		Z\left(\mathcal{E}_{N}^{(2 i-1)}\right) &=2 Z\left(\mathcal{E}_{N / 2}^{(i)}\right)-\left(Z\left(\mathcal{E}_{N / 2}^{(i)}\right)\right)^{2}\\
		Z\left(\mathcal{E}_{N}^{(2 i)}\right) &=\left(Z\left(\mathcal{E}_{N / 2}^{(i)}\right)\right)^{2}
	\end{aligned}
\end{equation}
The greater $Z(\mathcal{E}_N^{\left(i\right)})$ is, the less reliable $\mathcal{E}_N^{(i)}$ is.

\section{The weight of stabilizer generators and logical operators of quantum polar stabilizer codes}
\label{3}
This section reveals the relationship between the channel quality of quantum coordinate channels and the weight of their corresponding stabilizer generators and logical operators.

According to the intuition that one can design a quantum polar coding scheme for quantum computing based on polarization phenomenon of quantum channels arises from directly converting the classical polar coding circuits into quantum ones, the circuits shown in Fig. \ref{fig2} might correspond to a QPSC encoding circuit.

As shown in Fig. \ref{fig2}, the process $\ket{Q_1^N}\rightarrow\ket{C_1^N}$ can be seen as an QPSC encoding circuit denoted by $E_N$, and this encoding circuit can be specified as a unitary operator $U_{enc}$. As we mention in Section \ref{2.1}, the Pauli $Z$ operators (only the Z-type operators need to be considered since only Pauli X errors are considered) acting on the frozen qubits are transformed to stabilizer generators while the Pauli operators acting on the data qubits are transformed to logical operators by $U_{enc}$. We denote the unitary operator of CNOT gate by notation $U_{CNOT}$, with qubit 1 as target and qubit 2 as control. Fig. \ref{fig3} shows that if a Pauli $Z$ operator (whose weight is 1) “pass through” the target position of a CNOT gate, its weight becomes 2, while if it “pass through” the control position, its weight remains invariant. The transformation of a Pauli X operator is right the opposite. %It's easy to use the matrix representation of unitary operators to prove that $U_{CNOT}X_1U_{CNOT}^\dag=X_1$, $U_{CNOT}X_2U_{CNOT}^\dag=X_1X_2$, $U_{CNOT}Z_1U_{CNOT}^\dag=Z_1Z_2$ and $U_{CNOT}Z_2U_{CNOT}^\dag=Z_2$. 
The transformation of Pauli X operators is right the opposite.

\begin{figure*}[htbp]
	\centering
	\includegraphics[width=0.8\textwidth]{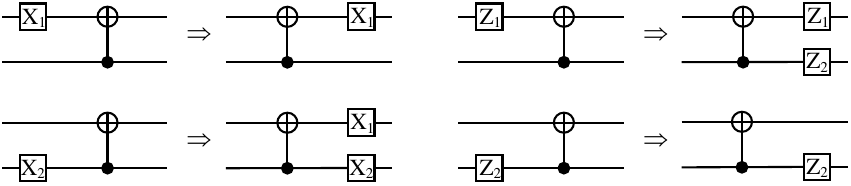}
	\caption{Transformation of Pauli $X$ operators and Pauli $Z$ operators under conjugation by CNOT gate}
	\label{fig3}
\end{figure*}

To reveal the relationship between the weight of stabilizer generators and logical operators of QPSCs and the channel quality of quantum coordinate channels, we first define two concepts "node" and "path" for the quantum polarization circuits.

\begin{definition}[\textbf{Node}]
	\label{Node}
	The operation which acts on a single qubit in a CNOT gate is defined to be a node. We use notation $\bigoplus$ to denote the target node which represents “XOR” operation of a CNOT gate, and notation $\bigodot$ to denote control node which represents identity operation of a CNOT gate.
\end{definition}

It's obvious that encoding circuit $E_N$ consists of $n=\log_2{N}$ CNOT gates blocks, and each block contains $N/2$ CNOT gates (or equivalently $N$ nodes). Arbitrary two adjacent CNOT gates blocks are connected by reverse shuffle operation realized by SWAP gates. The index of CNOT gates block is numbered from right to left, that is to say, the index of the rightmost CNOT gates block is 1, and  that of the leftmost one is $n$. The index of a node located at a CNOT gates block is numbered from top to bottom. One can see that if the index of a node is odd number, it must be a target node $\bigoplus$, if that of a node is even number, it must be a control node $\bigodot$.

\begin{definition}[\textbf{Path of a qubit passing through an encoding circuit}]
	\label{Path}
	A path $\mathcal{P}_j$ of $j$th input qubit $Q_j$ in the encoding circuit $E_N$ with length $N$ is defined as an ordered set $\left\{{node}_n^j,\cdots,{node}_1^j\right\}$, where ${node}_i^j\in\{\bigoplus,\bigodot\}$, $1\le i\le n$, $1\le j\le N$. 
\end{definition}

It's obvious that each ordered set which represents a complete path from the input of $E_N$ to the output contains $n=\log_2{N}$ nodes. Here, the reason for labeling the index $i$ of node from n to 1 rather than 1 to n is to keep this index consistent with the index of CNOT gates block (i.e., ${node}_i^j$ is located at the $i$th CNOT gates block).

Let's take the encoding circuit $E_4$ for example. As shown in Fig. \ref{fig2}(b), the path of the first qubit is $\mathcal{P}_1=\{\oplus,\oplus\}$, the second is $\mathcal{P}_2=\left\{\odot,\oplus\right\}$, the third is $\mathcal{P}_3=\left\{\oplus,\odot\right\}$, and the fourth is $\mathcal{P}_4=\left\{\odot,\odot\right\}$.

For path $\mathcal{P}_j$ in $E_N$, we observe that $\mathcal{P}_j$ corresponds to not only the $j$th stabilizer generator $S_j$ which comes from the transformation of Pauli $Z$ operator acting on qubit $Q_j$, but also the $j$th quantum coordinate channel $\mathcal{E}_N^{(j)}$ whose input is $Q_j$.

\begin{definition}[\textbf{Length of a path}]
	\label{Length of a path}
	For a path $\mathcal{P}=\left\{{node}_n,\cdots,{node}_1\right\}$, its length is defined as the number of nodes that it contains, denoted by $\left|\mathcal{P}\right|$.
\end{definition}

\begin{definition}[\textbf{Subpath}]
	\label{Subpath}
	For a path $\mathcal{P}=\left\{{node}_n,\cdots,{node}_1\right\}$ and a path $\mathcal{P}^\prime=\left\{{node}_k,{node}_{k-1}\cdots,{node}_{k-r}\right\}$, if $\mathcal{P}^\prime$ is part of $\mathcal{P}$, namely, $\mathcal{P}=\{{node}_n,\cdots,\mathcal{P}^\prime,\cdots,{node}_1\}=\left\{{node}_n,\cdots,{node}_k,{node}_{k-1}\cdots,{node}_{k-r},\cdots,{node}_1\right\}$, $\mathcal{P}^\prime$ is a subpath of path $\mathcal{P}$, denoted by $\mathcal{P}^\prime\in\mathcal{P}$. We notice that $\mathcal{P}^\prime$ follows ${node}_{k+1}$, and is followed by ${node}_{k-r-1}$. Specially, if $k=n$, $k-r>1$, we call $\mathcal{P}^\prime$ the \textbf{former path} of $\mathcal{P}$, if $k<n$, $k-r=1$, we call $\mathcal{P}^\prime$ the \textbf{latter path} of $\mathcal{P}$, and if $k=n$, $k-r=1$, we say $\mathcal{P}^\prime$ is identical to $\mathcal{P}$, denoted by $\mathcal{P}^\prime=\mathcal{P}$.
\end{definition}

For arbitrary two paths $\mathcal{P}_i$ and $\mathcal{P}_j$, as long as they have one node not the same, we say they are different, denoted by $\mathcal{P}_i\neq\mathcal{P}_j$. It's obvious that two paths with different lengths must be different.

\begin{lemma}[\textbf{Uniqueness of path}]
	\label{lemma1}
	Arbitrary two paths $\mathcal{P}_i$ and $\mathcal{P}_j$ $(1\le i\neq j\le N)$ in the same encoding circuit $E_N$ are different, namely, $\mathcal{P}_i\neq\mathcal{P}_j$. That is to say, each path in $E_N$ is unique.
\end{lemma}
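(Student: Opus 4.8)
The plan is to prove uniqueness by induction on $n=\log_2 N$, exploiting the recursive construction of the encoding circuit $E_N$ from two copies of $E_{N/2}$ shown in Fig.~\ref{fig2}(c). Before starting, it is worth recording the counting fact that motivates the statement: a path is an ordered sequence of $n$ nodes, each drawn from the two-element set $\{\oplus,\odot\}$, so there are at most $2^n=N$ distinct paths; since there are exactly $N$ input qubits $Q_1,\dots,Q_N$, the assertion ``all $\mathcal{P}_j$ are distinct'' is equivalent to the map $j\mapsto\mathcal{P}_j$ being a bijection onto the set of all length-$n$ node sequences. I will establish injectivity directly, which by this counting simultaneously yields surjectivity.

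The key structural observation that drives the induction is how a single path in $E_N$ decomposes. The leftmost block (block $n$) acts on the input qubits in their natural order, so the node $node_n^j$ is a target node $\oplus$ exactly when $j$ is odd and a control node $\odot$ exactly when $j$ is even; hence $node_n^j$ records the parity of $j$. The reverse shuffle operator $R_N$ then routes the odd-indexed wires $Q_{2m-1}$ to position $m$ of the top copy of $E_{N/2}$ and the even-indexed wires $Q_{2m}$ to position $m$ of the bottom copy, and the remaining blocks $n-1,\dots,1$ are precisely these two parallel copies of $E_{N/2}$. Consequently the latter path $\{node_{n-1}^j,\dots,node_1^j\}$ of $Q_j$ coincides with the path $\mathcal{P}_m$ of qubit $m$ inside the appropriate $E_{N/2}$, where $m=\lceil j/2\rceil$. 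One checks that this matches the worked example $E_4$: the odd qubits $Q_1,Q_3$ carry $node_2=\oplus$ and reproduce the two $E_2$ paths in their second coordinate, while the even qubits $Q_2,Q_4$ carry $node_2=\odot$ and do the same.

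The induction itself is then short. The base case $N=2$ is immediate, since $\mathcal{P}_1=\{\oplus\}\neq\{\odot\}=\mathcal{P}_2$. For the step, suppose toward a contradiction that $\mathcal{P}_i=\mathcal{P}_j$ with $i\neq j$ in $E_N$. Equality of the first coordinate forces $node_n^i=node_n^j$, so $i$ and $j$ share the same parity and therefore enter the \emph{same} copy of $E_{N/2}$, at positions $m=\lceil i/2\rceil$ and $m'=\lceil j/2\rceil$ with $m\neq m'$ (because $i\neq j$). By the structural observation, equality of the latter paths forces $\mathcal{P}_m=\mathcal{P}_{m'}$ inside $E_{N/2}$ with $m\neq m'$, contradicting the induction hypothesis. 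Hence no two paths in $E_N$ can coincide, completing the induction.

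I expect the main obstacle to lie in the second step rather than in the bookkeeping of the induction. One must justify carefully, from the recursive definition of $E_N$ and the explicit action of $R_N$, both that the parity of $j$ fixes $node_n^j$ and that each of the two half-size sub-circuits is a faithful copy of $E_{N/2}$ with its block indices merely shifted by one, so that the induction hypothesis applies verbatim. Pinning down the routing map $j\mapsto(\text{parity of }j,\ \lceil j/2\rceil)$ and its effect on the node types is where the real content sits; once that correspondence is fully rigorous, the uniqueness claim follows mechanically.
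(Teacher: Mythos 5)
Your proof is correct and takes essentially the same route as the paper's: induction on $n=\log_2 N$ via the recursive decomposition of $E_N$ into a new CNOT block plus two parallel copies of $E_{N/2}$, where paths entering different copies are distinguished by the new node ($\oplus$ versus $\odot$) and paths entering the same copy are distinguished by the induction hypothesis. Your contradiction-style organization, the explicit routing map $j\mapsto\bigl(\text{parity of } j,\ \lceil j/2\rceil\bigr)$, and the counting/bijectivity remark are simply a tighter rendering of the paper's three-case analysis (the paper's Case 3 is exactly your ``equal first node forces equal parity'' step).
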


\begin{proof}
	Here, we apply the mathematical induction to prove it.
	
	\textbf{Step 1}: For $n=1$, as shown in Fig. \ref{fig2}(a), the encoding circuit $E_2$ only contains one CNOT gate, it is obvious these two paths are different.
	
	\textbf{Step 2}: For $n=k\ (k>1)$, we assume that arbitrary two paths in encoding circuit $E_{N/2}\ (N/2=2^k)$ are different.
	
	\textbf{Step 3}: For $n=k+1\ (k>1)$, two independent copies of $E_{N/2}$ are combined to produce a larger encoding circuit $E_N$ in a recursive way. As shown in Fig. \ref{fig2}(c), we call the upper encoding circuit $E_{N/2}$ block 1, and the lower encoding circuit $E_{N/2}$ block 2. Here are 3 cases in total that we should consider.
	
	\textbf{Case 1}: For arbitrary two paths $\mathcal{P}_1^\prime$ and $\mathcal{P}_2^\prime$ both in block 1 or block 2, according to step 2, they are different, so we cannot find two identical paths $\mathcal{P}_1$ and $\mathcal{P}_2$ in $E_N$, which satisfies $\mathcal{P}_1^\prime\in\mathcal{P}_1$ and $\mathcal{P}_2^\prime\in\mathcal{P}_2$.
	
	\textbf{Case 2}: For two paths $\mathcal{P}_1^\prime$ in block 1 and $\mathcal{P}_2^\prime$ in block 2, if they are different, we also cannot find two identical paths $\mathcal{P}_1$ and $\mathcal{P}_2$ in $E_N$, which satisfies $\mathcal{P}_1^\prime\in\mathcal{P}_1$ and $\mathcal{P}_2^\prime\in\mathcal{P}_2$.
	
	\textbf{Case 3}: For two paths $\mathcal{P}_1^\prime$ in block 1 and $\mathcal{P}_2^\prime$ in block 2, if they are identical, according to the recursive construction of $E_N$ from two copies of $E_{N/2}$, we can see that path $\mathcal{P}_1^\prime$ follows a target node $\oplus$, and path $\mathcal{P}_2^\prime$ follows a control node $\odot$. These target node $\oplus$ and control node $\odot$ belong to the same CNOT gate. Hence, we have two different paths $\mathcal{P}_1=\{\oplus,\mathcal{P}_1^\prime\}$ and $\mathcal{P}_1=\{\odot,\mathcal{P}_2^\prime\}$.
	
	Therefore, no matter in which case, we cannot find two identical paths in the same encoding block $\mathcal{E}_N$, which completes the proof.
\end{proof}

\begin{lemma}[\textbf{Relation between two paths which have only one different node}]
	\label{lemma2}
	For two paths $\mathcal{P}_i=\left\{{node}_n^i,\cdots,{node}_1^i\right\}$ and $\mathcal{P}_j=\left\{{node}_n^j,\cdots,{node}_1^j\right\}\ (i<j)$ in encoding circuit $E_N$, the number of target node $\oplus$ contained by $\mathcal{P}_i$ is $n_\oplus^i$ and that contained by $\mathcal{P}_j$ is $n_\oplus^j$. We have $n_\oplus^i=n_\oplus^j+1$ if and only if these two paths pass through the same CNOT gate.
\end{lemma}

\begin{proof}
	(1) Necessity:
	Assuming that the $m$th ($2\leq m\leq log_{2}(n-1)$) nodes of $\mathcal{P}_i$ and $\mathcal{P}_j$ belong to the same CNOT gate, namely, $\mathcal{P}_i=\left\{\mathcal{P}_i^{for},\oplus,\mathcal{P}_i^{lat}\right\}$ and $\mathcal{P}_j=\left\{\mathcal{P}_j^{for},\odot,\mathcal{P}_j^{lat}\right\}$, where $\mathcal{P}_i^{for}$ and $\mathcal{P}_j^{for}$ are the former path of $\mathcal{P}_i$ and $\mathcal{P}_j$ respectively, and $\mathcal{P}_i^{lat}$ and $\mathcal{P}_j^{lat}$ are the latter path of $\mathcal{P}_i$ and $\mathcal{P}_j$ respectively.
	
	Since the $m$th nodes of $\mathcal{P}_i$ and $\mathcal{P}_j$ belong to the same CNOT gate, according to the recursive construction of encoding circuit, we must have $\mathcal{P}_i^{lat}=\mathcal{P}_j^{lat}$.
	
	Suppose the indexes of $\oplus$ and $\odot$ are $a$ and $b$ respectively, since they belong to the same CNOT gate, we have $b=a+1$. According to the recursive construction of encoding circuit, the node $\oplus$ with index $a$ located at the $(m+1)$th CNOT gates block of $E_N$ must following  ${node}_{m+2}$ with index $2a$ (or $2a-1$) located at the $(m+2)$th CNOT gates block, and node $\odot$ with index $b=a+1$ located at the $(m+1)$th CNOT gates block must following  ${node}_{m+2}^\prime$ with index $2b$ (or $2b-1$) located at the $(m+2)$th CNOT gates block, which means the index of ${node}_{m+2}$ and ${node}_{m+2}^\prime$ must both be odd or even numbers. If they are both even numbers, these two nodes are both control nodes, and if they are both odd numbers, these two nodes are both target nodes. Recursively using this relation, we have $\mathcal{P}_i^{for}=\mathcal{P}_j^{for}$, thus $n_\oplus^i=n_\oplus^j+1$.
	
	(2) Sufficiency: 
	Using proof by contradiction, we assume that there exists a $path_h$, which satisfies $n_\oplus^h=n_\oplus^j+1$, but $path_h$ and $path_j$ do not pass through the same CNOT gate. Suppose that the $l$th nodes of $path_h$ and $path_j$ are different, according to the necessity above, at the CNOT gates block corresponding to the $l$th node, there is a path $path_i$ that passes through the same CNOT gate as $path_j$, which satisfies $n_\oplus^i=n_\oplus^j+1$, and their former paths and latter paths are the same. Thus we have $path_h = path_i$, which contradicts Lemma
	\ref{lemma1}. Thus, the path $path_h$ above do not exist.
	
	The proof is completed.
\end{proof}

\begin{lemma}[\textbf{The weight of stabilizer generator and logical operator}]
	\label{lemma3}
	For the $i$th path $\mathcal{P}_i=\left\{{node}_n^i,\cdots,{node}_1^i\right\}$ in encoding circuit $E_N$, the number of target node $\oplus$ that it contains is denoted by $n_\oplus^i$ and that of control node $\odot$ is denoted by $n_{\odot}^i$ (we can see that $n_\oplus^i+n_{\odot}^i=\left|\mathcal{P}\right|=n=\log_2{N}$). If $i$th input qubit $Q_i$ of $E_N$ is frozen qubit and its state is set to $\ket{0}$, the weight of $i$th stabilizer generator (Z-type) $S_i$ is $2^{n_\oplus^i}$. Likewise, if it is logical qubit, the weight of ith logical $X$ operator $\overline{X_i}$ is $2^{n_{\odot}^i}$.
\end{lemma}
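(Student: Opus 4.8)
The plan is to prove both statements by induction on $n=\log_2N$, propagating the input Pauli operator through $E_N$ in the Heisenberg picture and tracking how its weight grows node by node. I would carry out the $Z$/stabilizer case in full and deduce the $X$/logical-operator case by the dual rules. The only ingredients are the single-gate conjugation identities recalled just before the lemma — a $Z$ doubles its weight at a target node $\oplus$ (since $U_{CNOT}Z_1U_{CNOT}^\dag=Z_1Z_2$) and is unchanged at a control node $\odot$ (since $U_{CNOT}Z_2U_{CNOT}^\dag=Z_2$), with the contrary behaviour for $X$ — together with the recursive construction of $E_N$ from two identical copies of $E_{N/2}$ (acting on disjoint upper and lower halves of the wires) used in Lemma \ref{lemma1}.

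For the base case $n=1$ the circuit $E_2$ is a single CNOT, and the four identities settle every case at once: a target input gives $S=Z_1Z_2$ of weight $2^1$ and $\overline{X}=X_1$ of weight $2^0$, and a control input gives the symmetric pair. For the inductive step, assume the statement for $E_{N/2}$ and read $U_{enc}$ so that its leftmost combining block acts first. Each path then splits as $\mathcal{P}_i=\{{node}_n^i,\mathcal{P}_i^\prime\}$, where $\mathcal{P}_i^\prime$ is the path of the corresponding wire inside one $E_{N/2}$ copy; if $\mathcal{P}_i^\prime$ contains $m_i$ target nodes then $n_\oplus^i=m_i+1$ when ${node}_n^i=\oplus$ and $n_\oplus^i=m_i$ when ${node}_n^i=\odot$. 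Conjugating $Z_i$ by $U_{enc}$ first conjugates it by the combining block and then lets the result propagate through the $E_{N/2}$ copies, where the induction hypothesis applies.

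If ${node}_n^i=\odot$, the combining block leaves $Z_i$ untouched, a single $Z$ enters one $E_{N/2}$ copy, and the hypothesis gives weight $2^{m_i}=2^{n_\oplus^i}$. If ${node}_n^i=\oplus$, the combining block spreads $Z_i$ to $Z_iZ_p$, where $p$ is the control partner on the same combining CNOT; after the reverse shuffle these two $Z$'s enter the two distinct $E_{N/2}$ copies. The decisive observation is that $i$ and $p$ sit at corresponding positions, so their sub-paths coincide, $\mathcal{P}_i^\prime=\mathcal{P}_p^\prime$: this is precisely Lemma \ref{lemma2} applied to $\mathcal{P}_i$ and $\mathcal{P}_p$, which differ only in ${node}_n$ (belonging to a common CNOT) and therefore share their latter paths. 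By the hypothesis each branch contributes weight $2^{m_i}$, and since the two copies act on disjoint wires the total weight is $2\cdot2^{m_i}=2^{m_i+1}=2^{n_\oplus^i}$.

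The step needing the most care is justifying that the two branch weights simply add, with no overlap and no cancellation. This rests on two facts I would state explicitly: CNOT conjugation maps a $Z$-type operator to a $Z$-type operator (never introducing $X$ or $Y$), so the propagated operator remains a tensor product of $Z$'s whose weight equals its support size; and the two branches live on the disjoint wire sets of the upper and lower $E_{N/2}$ blocks, so no wire ever receives two $Z$'s that could multiply to the identity. The logical-$X$ claim then follows verbatim after interchanging $\oplus$ and $\odot$ and using $U_{CNOT}X_1U_{CNOT}^\dag=X_1$, $U_{CNOT}X_2U_{CNOT}^\dag=X_1X_2$, which make an $X$ double at a control node and persist at a target node, giving weight $2^{n_\odot^i}$ for $\overline{X_i}$.
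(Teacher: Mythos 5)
Your proof is correct and follows essentially the same route as the paper's: propagate the input Pauli operator through $E_N$ using the CNOT conjugation identities, and use the fact (Lemma~\ref{lemma2}) that the two branches created at a target node follow identical subsequent paths, so the weight doubles exactly once per target node, giving $2^{n_\oplus^i}$ for $S_i$ and dually $2^{n_\odot^i}$ for $\overline{X_i}$. The only difference is organizational: you cast the argument as a structural induction on the recursive construction of $E_N$ with an explicit base case, and you explicitly justify that the two branches have disjoint supports and remain $Z$-type (so their weights add without cancellation), a point the paper's node-by-node recursion leaves implicit.
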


\begin{proof}
	SinceAs mentioned in Section \ref{2.1}, the behavior of $E_N$ can be specified as a unitary operator $U_{E_N}$. If the $i$th input qubit $Q_i$ of $E_N$ is set to state $\ket{0}$, it is a frozen qubit, and the Pauli $Z$ operator $Z_i$ acting on $Q_i$ is transformed to $i$th stabilizer generators $S_i$ under conjugation by $U_{E_N}$, namely, $S_i=U_{E_N}Z_iU_{E_N}^\dag$. If $Q_i$ is a logical qubit, the Pauli $X$ operator acting on it are transformed to logical $X$ operator under conjugation by $U_{E_N}$, namely, $\overline{X_i}=U_{E_N}X_iU_{E_N}^\dag$.
	
	If $Q_i$ is a frozen qubit, we consider how the Pauli $Z$ operator $Z_i$ transforms under conjugation by $U_{E_N}$. As shown in Fig. \ref{fig3}, if a Pauli $Z$ operator (whose weight is 1) passes through a target node $\oplus$ of a CNOT gate, its weight becomes 2, while if it passes through a control node $\odot$, its weight remains invariant. If we assume that ${node}_n^i$ of $\mathcal{P}_i$ is target node $\oplus$, $Z_i$ become $Z_iZ_{i-1}$ under conjugation by the first CNOT gate. After the conjugation, $Z_i$ continues along a subpath $\mathcal{P}_i^\prime\in\mathcal{P}_i$, while $Z_{i-1}$ along another path $\mathcal{P}_{i-1}^\prime$. However, as we prove in Lemma \ref{lemma1}, we must have $\mathcal{P}_i^\prime=\mathcal{P}_{i-1}^\prime$, since $\mathcal{P}_{i-1}^\prime$ follows a control node $\odot$ while $\mathcal{P}_i^\prime$ follows a target node $\oplus$, and these two nodes belong to a same CNOT gate. Thus, the transformation of weight of $Z_{i-1}$ along $\mathcal{P}_{i-1}^\prime$ is the same as that of $Z_i$ along $\mathcal{P}_i^\prime$, which simultaneously multiply by 2 or remain invariant dependent on the first node of $\mathcal{P}_i^\prime$ is $\oplus$ or $\odot$. Recursively using this law, the weight of $i$th stabilizer generators $S_i$ must be $2^{n_\oplus^i}$.
	
	Likewise, if $Q_i$ is a logical qubit, the analysis process is the same and the weight of $i$th logical $X$ operator $\overline{X_i}$  is $2^{n_{\odot}^i}$.
\end{proof}

\begin{theorem}[\textbf{Negative correlation between the weight of stabilizer generators and the channel quality of quantum coordinate channels}]
	\label{theorem 1}
	For two paths $\mathcal{P}_a=\left\{{node}_n^a,\cdots,{node}_1^a\right\}$ and $\mathcal{P}_b=\left\{{node}_n^b,\cdots,{node}_1^b\right\}$ in encoding circuit $E_N$, the number of target node $\oplus$ contained by $\mathcal{P}_a$ is $n_\oplus^a$ and that contained by $\mathcal{P}_b$ is $n_\oplus^b$. If $n_\oplus^a>n_\oplus^b$, the Bhattacharyya parameter of quantum coordinate channel $\mathcal{E}_N^{(a)}$ is greater than that of $\mathcal{E}_N^{(b)}$, namely, $Z\left(\mathcal{E}_N^{(a)}\right)>Z\left(\mathcal{E}_N^{(b)}\right)$.
\end{theorem}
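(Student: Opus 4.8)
The plan is to turn the comparison of Bhattacharyya parameters into a question about the node strings of the two paths, using the recursion in Eq.~\eqref{QQCC recursion}. Reading a path $\mathcal{P}_j=\{node_n^j,\dots,node_1^j\}$ from the innermost node $node_1^j$ outward, I would show that $Z(\mathcal{E}_N^{(j)})$ is produced from the primal value $Z_0:=Z(\mathcal{E})$ by composing, at each node, the map $f_\oplus(x)=2x-x^2$ at a target node $\oplus$ (the odd, ``worse'' branch $2i-1$ of the recursion, which increases $Z$) and $f_\odot(x)=x^2$ at a control node $\odot$ (the even, ``better'' branch $2i$). The correspondence $\oplus\leftrightarrow$ odd branch is the one visible in the $E_4$ example and is pinned down in general by Lemma~\ref{lemma2}, which says that two sibling paths share their former and latter subpaths and differ only in the single node where one carries $\oplus$ and the other $\odot$. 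Since Lemma~\ref{lemma1} makes every string in $\{\oplus,\odot\}^n$ a unique path, I may treat paths as abstract node strings and compute $Z$ as an $n$-fold composition.

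Next I would record the elementary facts on $[0,1]$: both $f_\oplus$ and $f_\odot$ are strictly increasing (so compositions preserve strict inequalities of inputs), $f_\odot(x)\le x\le f_\oplus(x)$, and $f_\oplus(x)>f_\odot(x)$ for $x\in(0,1)$ with gap $2x(1-x)$. These settle the base case at once: if $\mathcal{P}_a$ and $\mathcal{P}_b$ differ in exactly one node and agree elsewhere, then by Lemma~\ref{lemma2} their common latter subpath produces some $Z^\ast\in(0,1)$, the differing node applies $f_\oplus$ versus $f_\odot$, and the common (monotone) former subpath preserves the resulting strict order, giving $Z(\mathcal{E}_N^{(a)})>Z(\mathcal{E}_N^{(b)})$. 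This is exactly the ``one extra target node, same positions'' situation, and it is where Lemma~\ref{lemma2} and Lemma~\ref{lemma3} do their intended work.

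The real task is to pass from this single-node statement to two arbitrary paths constrained only by $n_\oplus^a>n_\oplus^b$. My route would be monotone chaining: a flip $\odot\mapsto\oplus$ at a fixed position strictly raises $Z$ (the base case), and the adjacent-swap inequality $f_\odot(f_\oplus(x))-f_\oplus(f_\odot(x))=2x^2(1-x)^2>0$ shows that sliding a target node inward strictly raises $Z$ as well. Using swaps to push every target node of $\mathcal{P}_a$ outward and every target node of $\mathcal{P}_b$ inward sandwiches the two values between extreme arrangements, so the theorem would reduce to the single inequality $\mathrm{MIN}(n_\oplus^a)>\mathrm{MAX}(n_\oplus^b)$, where $\mathrm{MIN}(k)=f_\oplus^{(k)}(f_\odot^{(n-k)}(Z_0))$ and $\mathrm{MAX}(l)=f_\odot^{(n-l)}(f_\oplus^{(l)}(Z_0))$ ($f^{(m)}$ denoting $m$-fold composition), and further to consecutive counts $l=k-1$ together with monotonicity of $\mathrm{MIN}$ in $k$.

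This extreme comparison is the step I expect to be the true obstacle, and I would approach it with caution, because it is not clearly true: when the target positions of the two paths are ``crossed'' (neither nested in the other) the maps $f_\oplus$ and $f_\odot$ act on very different intermediate values, and a count advantage need not produce a $Z$ advantage. A quick estimate at $Z_0=\tfrac12$ and large $n$ makes $\mathrm{MIN}(2)\sim 4\,(\tfrac12)^{2^{\,n-2}}$ while $\mathrm{MAX}(1)=(\tfrac34)^{2^{\,n-1}}=(\tfrac{9}{16})^{2^{\,n-2}}$, and since $\tfrac{9}{16}>\tfrac12$ the latter eventually dominates, so in fact $\mathrm{MIN}(2)<\mathrm{MAX}(1)$ already near $n=6$. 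Hence the bare hypothesis $n_\oplus^a>n_\oplus^b$ does not by itself force $Z(\mathcal{E}_N^{(a)})>Z(\mathcal{E}_N^{(b)})$. I would therefore either strengthen the hypothesis so that the target positions of $\mathcal{P}_b$ are contained in those of $\mathcal{P}_a$, in which case $\mathcal{P}_b$ reaches $\mathcal{P}_a$ by pure flips and the chaining argument closes cleanly, or restrict the statement to the low-noise/asymptotic regime in which the recursion does rank the coordinate channels monotonically in the number of target nodes, and state the theorem accordingly.
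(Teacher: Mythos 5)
Your reconstruction of $Z(\mathcal{E}_N^{(j)})$ as an $n$-fold composition of $f_\oplus(x)=2x-x^2$ and $f_\odot(x)=x^2$ applied to $Z_0$, and your base case, coincide exactly with the paper's own proof of Theorem~\ref{theorem 1} --- and that base case is \emph{all} the paper actually proves. The paper reduces to $n_\oplus^a=n_\oplus^b+1$ and then cites Lemma~\ref{lemma2} to write $\mathcal{P}_a=\{\mathcal{P}^{for},\oplus,\mathcal{P}^{lat}\}$ and $\mathcal{P}_b=\{\mathcal{P}^{for},\odot,\mathcal{P}^{lat}\}$ with common former and latter subpaths. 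But Lemma~\ref{lemma2} carries the hypothesis that the two differing nodes belong to the \emph{same CNOT gate}, i.e.\ that the paths are siblings agreeing at every other position; the paper silently drops this hypothesis. For arbitrary paths with $n_\oplus^a=n_\oplus^b+1$, say $\{\oplus,\oplus,\odot\}$ and $\{\odot,\odot,\oplus\}$, no such decomposition exists, so the paper's appeal to Lemma~\ref{lemma2} is unjustified, and its one-line reduction of general $x$ to $x=1$ fails for the same reason (interpolating between two arbitrary paths cannot be done by sibling steps alone; it requires controlling \emph{positions}, not just counts). In other words, the paper establishes only your pure-flip statement: changing one fixed node from $\odot$ to $\oplus$, all else equal, strictly increases $Z$.

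Your further step --- refusing to chain past crossed configurations and instead testing the extreme arrangements --- is the correct instinct, and your counterexample is valid, so the gap in the paper's proof cannot be filled. Concretely, take $Z_0=1/2$ and $n=6$: the path whose two target nodes are applied last in the recursion \eqref{QQCC recursion}, $\mathcal{P}_a=\{\oplus,\oplus,\odot,\odot,\odot,\odot\}$, has $Z = f_\oplus\bigl(f_\oplus\bigl((1/2)^{16}\bigr)\bigr)\approx 6.1\times 10^{-5}$, while the path whose single target node is applied first, $\mathcal{P}_b=\{\odot,\odot,\odot,\odot,\odot,\oplus\}$, has $Z=(3/4)^{32}\approx 1.0\times 10^{-4}$; both node strings are realized by actual paths in $E_{64}$ since, by Lemma~\ref{lemma1}, all $2^n$ strings occur. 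Thus $n_\oplus^a=2>1=n_\oplus^b$ yet $Z(\mathcal{E}_N^{(a)})<Z(\mathcal{E}_N^{(b)})$, so Theorem~\ref{theorem 1} with the bare hypothesis $n_\oplus^a>n_\oplus^b$ is false for the recursion-defined Bhattacharyya parameters the paper works with, and Corollary~\ref{corollary 1}'s claim that entire weight classes are ordered by channel quality inherits the same defect. Your proposed repair --- requiring the target positions of $\mathcal{P}_b$ to be nested inside those of $\mathcal{P}_a$, so that $\mathcal{P}_a$ is reached from $\mathcal{P}_b$ by pure flips, each of which strictly raises $Z$ by monotonicity of $f_\oplus$ and $f_\odot$ and by $f_\oplus>f_\odot$ on $(0,1)$ --- is exactly the scope of statement that the paper's argument, correctly read, can support.
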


\begin{proof}
	Since $n_\oplus^a>n_\oplus^b$, we can suppose $n_\oplus^a=n_\oplus^b+x$. Here we only consider the case that $x=1$ (i.e., $n_\oplus^a=n_\oplus^b+1$). If in this case we can prove $Z\left(\mathcal{E}_N^{(a)}\right)>Z\left(\mathcal{E}_N^{(b)}\right)$, this relation holds for arbitrary $x\geq1$.
	
	By Lemma \ref{lemma2}, when $n_\oplus^a=n_\oplus^b+1$, $\mathcal{P}_a$ and $\mathcal{P}_b$ can be rewritten as $\mathcal{P}_a=\left\{\mathcal{P}_a^{for},\oplus,\mathcal{P}_a^{lat}\right\}$ and $\mathcal{P}_b=\left\{\mathcal{P}_b^{for},\odot,\mathcal{P}_b^{lat}\right\}$, where $\mathcal{P}_a^{for}=\mathcal{P}_b^{for}$, $\mathcal{P}_a^{lat}=\mathcal{P}_b^{lat}$ and these two nodes $\oplus$ and $\odot$ belong to a same CNOT gate. Suppose this CNOT gate is located at the $\left(m+1\right)$th CNOT gates block and the index of nodes $\oplus$ located at this CNOT gates block is $i^\prime$, the index of nodes $\odot$ must be $j^\prime=i^\prime+1$. According to Eq. \eqref{QQCC recursion}, we have,
	\begin{align}
		\begin{gathered}
			Z\left(\mathcal{E}_{\frac{N}{2^{n-m-1}}}^{\left(i^{\prime}\right)}\right)=2 Z\left(\mathcal{E}_{\frac{N}{2^{n-m}}}^{\left(i^{\prime \prime}\right)}\right)-\left(Z\left(\mathcal{E}_{\frac{N}{2^{n-m}}}^{\left(i^{\prime \prime}\right)}\right)\right)^{2} \\
			Z\left(\mathcal{E}_{\frac{N}{2^{n-m-1}}}^{\left(j^{\prime}\right)}\right)=\left(Z\left(\mathcal{E}_{\frac{N}{2^{n-m}}}^{\left(i^{\prime \prime}\right)}\right)\right)^{2}
		\end{gathered}
	\end{align}
	where $i^\prime=2i^{\prime\prime}-1$ and $j^\prime=2i^{\prime\prime}$. One can verify that
	\begin{equation}
		\begin{aligned}
			&Z\left(\mathcal{E}_{\frac{N}{2^{n-m-1}}}^{\left(i^{\prime}\right)}\right)-Z\left(\mathcal{E}_{\frac{N}{2^{n-m-1}}}^{\left(j^{\prime}\right)}\right)\\
			&=2 Z\left(\mathcal{E}_{\frac{N}{2^{n-m}}}^{\left(i^{\prime \prime}\right)}\right)-\left(Z\left(\mathcal{E}_{\frac{N}{2^{n-m}}}^{\left(i^{\prime \prime}\right)}\right)\right)^{2}\\
			&-\left(Z\left(\mathcal{E}_{\frac{N}{2^{n-m-1}}}^{\left(i^{\prime \prime}\right)}\right)\right)^{2}
		\end{aligned}
	\end{equation}
	Since $Z\left(\mathcal{E}_{\frac{N}{2^{n-m}}}^{\left(i^{\prime\prime}\right)}\right)$ takes values in $\left[ 0,1 \right]$, we must have $Z\left(\mathcal{E}_{\frac{N}{2^{n-m-1}}}^{(i^\prime)}\right)-Z\left(\mathcal{E}_{\frac{N}{2^{n-m-1}}}^{\left(j^\prime\right)}\right)\geq0$, with equality holds for $Z\left(\mathcal{E}_{\frac{N}{2^{n-m}}}^{\left(i^{\prime\prime}\right)}\right)=0$ or 1.
	
	Suppose the $i$th and the $j$th nodes in the $\left(m+2\right)$th CNOT gates block connect to the $i^\prime$th node $\oplus$ and the $j^\prime$th node $\odot$ in the $\left(m+1\right)$th CNOT gates block, respectively, according to Eq. \eqref{QQCC recursion}, the index $i$ and $j$ satisfy $(i=2i^\prime,\ j=2j^\prime)$ or $(i=2i^\prime-1,\ j=2j^\prime-1)$. 
	
	\textbf{Case 1}: if $i=2i^\prime$, $j=2j^\prime$, we have
	\begin{equation}
		\begin{aligned}
			Z\left(\mathcal{E}^{(i)}_{\frac{N}{2^{n-m-2}}}\right)=Z\left(\mathcal{E}_\frac{N}{2^{n-m-1}}^{\left(i^{\prime}\right)}\right)^{2}\\
			Z\left(\mathcal{E}^{(j)}_{\frac{N}{2^{n-m-2}}}\right)=Z\left(\mathcal{E}_{\frac{N}{2^{n-m-1}}}^{\left(j^{\prime}\right)}\right)^{2}
		\end{aligned}
	\end{equation}
	
	\textbf{Case 2}: if $i=2i^\prime-1$, $j=2j^\prime-1$, we have
	\begin{equation}
		\begin{aligned}
			Z\left(\mathcal{E}^{(i)}_{\frac{N}{2^{n-m-2}}}\right)=2Z\left(\mathcal{E}_\frac{N}{2^{n-m-1}}^{\left(i^{\prime}\right)}\right)-Z\left(\mathcal{E}_\frac{N}{2^{n-m-1}}^{\left(i^{\prime}\right)}\right)^{2}\\
			Z\left(\mathcal{E}^{(j)}_{\frac{N}{2^{n-m-2}}}\right)=2Z\left(\mathcal{E}_{\frac{N}{2^{n-m-1}}}^{\left(j^{\prime}\right)}\right)-Z\left(\mathcal{E}_{\frac{N}{2^{n-m-1}}}^{\left(j^{\prime}\right)}\right)^{2}
		\end{aligned}
	\end{equation}
	Since $Z\left(\mathcal{E}_{N/2^{n-m-1}}^{(i^\prime)}\right)\geq Z\left(\mathcal{E}_{\frac{N}{2^{n-m-1}}}^{\left(j^\prime\right)}\right)$, no matter which case, we have,
	\begin{equation}
		Z\left(\mathcal{E}_{\frac{N}{2^{n-m-2}}}^{(i)}\right)>Z\left(\mathcal{E}_{\frac{N}{2^{n-m-2}}}^{(j)}\right)
	\end{equation}
	Using this recursion of Bhattacharyya parameter, one can obtain
	\begin{equation}
		Z\left(\mathcal{E}_N^{\left(a\right)}\right)>Z\left(\mathcal{E}_N^{\left(b\right)}\right)
	\end{equation}
	which completes the proof.
\end{proof}

It should be noticed that when $n_\oplus^a>n_\oplus^b$, it's uncertain that which one of $Z\left(\mathcal{E}_N^{(a)}\right)$ and $Z\left(\mathcal{E}_N^{(b)}\right)$ is greater.

Using Theorem \ref{theorem 1} and Lemma \ref{lemma3}, we can obtain Corollary \ref{corollary 1}, namely, the weight of stabilizer generators and logical X operators of QPSCs will polarize.

\begin{corollary}[\textbf{Polarization of the weight of stabilizer generators and logical $X$ operators of QPSCs}]
	\label{corollary 1}
	For the encoding circuit $E_N$ as shown in Fig. \ref{fig2}(c), the corresponding stabilizer generators of quantum coordinate channels are $S_1,\cdots,S_N$ which can be divided into $n+1\ (n=\log_2{N})$ parts according to their weight, namely,
	\begin{equation}
		S=\{S_{weight=2^0},S_{weight=2^1},\cdots,S_{weight=2^n}\}
	\end{equation}
	where $S_{weight=2^i}\in S$ is the subset of $S$, which contains all stabilizer generators whose weight is $2^i$. Likewise, the corresponding logical $X$ operators of quantum coordinate channels are $L=\{\overline{X_1},\cdots,\overline{X_N}\}$ which can be divided into $n+1$ parts according to their weight, namely,
	\begin{equation}
		L=\{L_{weight=2^0},L_{weight=2^1},\cdots,L_{weight=2^n}\}
	\end{equation}
	
	Since the channel quality will polarize, according to Lemma 3 and Theorem 1, the weight of stabilizer generators and logical $X$ operators of QPSCs will polarize.
\end{corollary}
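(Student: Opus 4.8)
The plan is to derive this corollary almost entirely from Lemma~\ref{lemma3}, with Theorem~\ref{theorem 1} and the channel-polarization result of Section~\ref{2.4.1} supplying the word ``polarize.'' First I would invoke Lemma~\ref{lemma3} directly: for every frozen qubit $Q_i$ the stabilizer generator $S_i$ has weight $2^{n_\oplus^i}$, and for every data qubit the logical operator $\overline{X_i}$ has weight $2^{n_\odot^i}$. Since each path $\mathcal{P}_i$ has exactly $n=\log_2 N$ nodes, each of which is either a target node $\oplus$ or a control node $\odot$, we have $n_\oplus^i\in\{0,1,\dots,n\}$ and $n_\odot^i=n-n_\oplus^i\in\{0,1,\dots,n\}$. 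Hence every stabilizer weight lies in the set $\{2^0,2^1,\dots,2^n\}$, and likewise for every logical $X$ weight, so grouping by weight yields at most $n+1$ classes, which is exactly the partition claimed.

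Next I would verify that each of the $n+1$ classes is actually populated, so that the partition is genuine. The key observation is that, by the recursive construction of $E_N$ from two copies of $E_{N/2}$ together with Lemma~\ref{lemma1}, the $N=2^n$ paths are in bijection with the length-$n$ strings over $\{\oplus,\odot\}$: reading a path from block $n$ down to block $1$ records one such string, and by uniqueness no two input qubits yield the same string. Since there are $N=2^n$ distinct paths and exactly $2^n$ strings, injectivity forces the map to be onto. Under this bijection $n_\oplus^i$ is simply the number of $\oplus$ symbols, so the number of paths with a prescribed value $n_\oplus^i=k$ is $\binom{n}{k}\ge 1$ for every $k\in\{0,\dots,n\}$. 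This shows every weight class $2^k$ is nonempty and, incidentally, that the weights are distributed binomially across the $n+1$ values.

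Finally, to justify the term ``polarize,'' I would appeal to Theorem~\ref{theorem 1} and to the polarization of the quantum coordinate channels. Theorem~\ref{theorem 1} supplies a strictly monotone correspondence: a larger $n_\oplus$ (hence larger stabilizer weight and smaller logical $X$ weight) forces a strictly larger Bhattacharyya parameter $Z(\mathcal{E}_N^{(i)})$. Because the Bhattacharyya parameters of the $\{\mathcal{E}_N^{(i)}\}$ polarize toward $0$ and $1$ as $N$ grows, inherited through the recursion~\eqref{QQCC recursion} from the classical result, the monotone map of Theorem~\ref{theorem 1} drives the associated weights toward the two ends of the range $\{2^0,\dots,2^n\}$. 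Reading ``good'' coordinate channels (small $Z$) as carrying low-weight stabilizers and high-weight logical $X$ operators, and ``bad'' channels (large $Z$) as carrying high-weight stabilizers and low-weight logical $X$ operators, then makes the polarization of the weight spectrum precise.

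I expect the only real obstacle to be pinning down what ``polarize'' should mean for a finite, discrete weight spectrum. The structural partition into $n+1$ classes is immediate from Lemma~\ref{lemma3}, and the nonemptiness is a one-line binomial count, but the link to genuine channel polarization is qualitative and relies on combining Theorem~\ref{theorem 1} with the asymptotic statement of Section~\ref{2.4.1}. Care is needed because the weights themselves do not literally converge to two values the way channel capacities do; rather it is the ordering of weights by channel quality that inherits the polarization, so I would state the conclusion in those relative terms.
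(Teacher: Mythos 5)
Your proposal is correct and takes essentially the same route as the paper's own proof: Lemma~\ref{lemma3} gives the weights $2^{n_\oplus}$ and $2^{n_\odot}$, uniqueness of paths (Lemma~\ref{lemma1}) yields the binomial count $\binom{n}{x}$ of paths with $x$ target nodes and hence the partition into $n+1$ weight classes, and Theorem~\ref{theorem 1} supplies the monotone link between weights and Bhattacharyya parameters. Your extra care in making the path--string bijection explicit and in pinning down what ``polarize'' means for a finite weight spectrum only sharpens steps the paper leaves implicit.
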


\begin{proof}
	
	For a path $\mathcal{P}=\left\{{node}_n,\cdots,{node}_1\right\}$ in the encoding circuit $E_N$, where $n=\log_2{N}$ and ${node}_i\in{\bigoplus,\bigodot}$, $1\le i\le n$, suppose the number of target node $\bigoplus$ of $\mathcal{P}$ is $n_\oplus=x\ (0\le x\le n)$, by Lemma \ref{lemma3}, its weight is $2^x$. Since each ${node}_i\ (1\le i\le n)$ only takes values in ${\bigoplus,\bigodot}$, and all paths in $E_N$ are unique, there are $\binom{n}{x}$ different paths whose number of target node $\bigoplus$ are $x$. Moreover, these $\binom{n}{x}$ different paths correspond to $\binom{n}{x}$  stabilizer generators whose weight are all $2^x$. Hence, the set $S={S_1,\cdots,S_N}$ can be divided into $n+1$ parts according to the weight, namely, $S=\{S_{weight=2^0},S_{weight=2^1},\cdots,S_{weight=2^n}\}$.
	
	By Theorem \ref{theorem 1}, the Bhattacharyya parameter of the $\binom{n}{i}$ quantum coordinate channels whose corresponding $\binom{n}{i}$ stabilizer generators are all $2^i$ is lager than that of the $\binom{n}{j}$ quantum coordinate channels whose corresponding $\binom{n}{j}$ stabilizer generators are all $2^j$, for arbitrary $0\le j<i\le n$. As the Bhattacharyya parameter will polarize, the weight of the stabilizer will polarize.
	
	For the set of logical $X$ operators $L=\{\overline{X_1},\cdots,\overline{X_N}\}$, we have the same conclusion and the similar proof is omitted.
\end{proof}

For a stabilizer code $C$, suppose its stabilizer group is $\mathcal{S}=\langle S_1,\cdots,S_{N-k}\rangle $, there is a useful way to represent $\mathcal{S}$ by using \textbf{parity-check matrix} $H$. In bit-flip channels, all stabilizer generators of $\mathcal{S}$ only contains Pauli $Z$ operator, Hence $H$ is a $(N-k)\times N$ matrix whose rows correspond to generators $S_1$ through $S_{N-k}$. For generator $S_i$, if it contains an $Z$ on the $j$th qubit, the element $g_{i,j}$ of $H$ located at $i$th row and $j$th column is  1, while it contains an $I$ on the $j$th qubit, the element $g_{i,j}$ of $H$ located at $i$th row and $j$th column is 0. For the logical operators of $C$, we can use a similar way, namely, \textbf{logical operator matrix} to represent it.

For QPSCs with code length $N=2^n, n\geq1$, if the $i$th input qubit $Q_i$ is a frozen qubit, the $i$th element of  $1\times N$ row vector representation $r\left(Z_i\right)$ of Pauli $Z$ operator $Z_i$ acting on $Q_i$ is 1, while the rest of it are 0. The stabilizer generator $S_i$, which is transformed from $Z_i$ under conjugation by the encoding process $U_{E_N}$, is  $S_i=U_{E_N}Z_iU_{E_N}^\dag$, and its corresponding row vector representation is
\begin{equation}
	\label{22}
	r\left(S_i\right)=\ r(Z_i)G_N 
\end{equation}
where $G_N=F^{\otimes n}$, $F=\left(\begin{matrix}1&1\\0&1\\\end{matrix}\right)$. Likewise, if $Q_i$ is a logical qubit, the  Pauli $X$ operator acting on $Q_i$ is $X_i$ and corresponding row vector representation is $r(X_i)$. Then the logical $X$ operator, which is transformed from $X_i$ under conjugation by the encoding process $U_{E_N}$, is $\overline{X_i}=U_{E_N}X_iU_{E_N}^\dag$, and its corresponding row vector representation is
\begin{equation}
	\label{23}
	r\left(\bar{X_i}\right)=\ r(X_i)G_N^T 
\end{equation}

\section{Stabilizer construction algorithms of quantum polar stabilizer codes}
\label{4}
In this section, we propose two stabilizer construction algorithms of QPSCs -- CA algorithm and BS algorithm. The first algorithm is based on the ranking of channel quality of quantum coordinate channels which is similar to the code construction of classical polar code mentioned in Section \ref{2.3}. The second algorithm is based on the ranking of quantum coordinate channels according to the weight of their corresponding stabilizer generators. It's obvious that CA is the same as the construction in Ref.\cite{9366784}. We believe BS is more reasonable than the existing constructions, because they are designed in a way from the point of view of stabilizer codes, by considering the weight of logical operators and stabilizers and the code distance, which influence the correcting capability of stabilizer codes.

\subsection{Coherent-information-achieving construction }%-- stabilizer construction algorithm based on the ranking of channel quality of quantum coordinate channels
\label{}
%In this paper, the primal QSC $\mathcal{E}$ that we consider is bit-flip channel, whose BTPM is the same as the TPM of classical BSC. We have proved the polarization of the MSLCI of quantum coordinate channels $\{\mathcal{E}_N^{(i)}\}$ in our previous work\cite{yi2023channel}. However, precisely computing the MSLCI of each coordinate channels $\mathcal{E}_N^{(i)}$ should first obtain the BTPM of each coordinate channel $\mathcal{E}_N^{(i)}$, which is exponential to code length $N$.

As mentioned in Section \ref{2.4.1}, the Bhattacharyya parameter can also be used to measure quantum channel quality, and we can approximately estimate the Bhattacharyya parameter of each $\mathcal{E}_N^{(i)}$ by Eq. \eqref{QQCC recursion}. In CA algorithm, we use the Bhattacharyya parameter to rank channel quality. The detailed process of CA is as follows,

\textbf{Step 1}: For bit-flip channel $\mathcal{E}$ with error probability $p$, compute its symmetric coherent information (i.e., MSLCI) $I\left(\rho=\frac{1}{2}\ket{0}\bra{0}+\frac{1}{2}\ket{1}\bra{1}, \mathcal{E}\right)=1-H(p)$, where $H(\cdot)$ is the Shannon entropy.

\textbf{Step 2}: Determine code length $N$ and compute the number of logical qubits $k=\left\lfloor N(1-H(p))\right\rfloor$.

\textbf{Step 3}: Use Eq. \eqref{QQCC recursion} to approximately estimate the Bhattacharyya parameter $Z(\mathcal{E}_N^{(i)})$ of each $\mathcal{E}_N^{(i)}$, then rank the coordinate channels in descending order according to their Bhattacharyya parameter and choose the former $k$ channels to transmit data qubits while the last $N-k$ channels to transmit frozen qubits which is set to state $\ket{0}^{\otimes N-k}$.

\textbf{Step 4}: Use Eq. \eqref{22} and Eq. \eqref{23} to compute the row vector representation of stabilizer generators and logical operators.

The pseudocode is shown in Algorithm \ref{algorithm1}.
\begin{algorithm}[]
	\caption{Coherent-information-achieving construction}
	\label{algorithm1}
	\LinesNumbered
	\KwIn{primal error rate $p$, code length parameter $n$}
	\KwOut{parity-check matrix $H$, logical operator matrix $H_L$}
	CodeLength $N=2^n$\;
	$k=\left\lfloor N\times (1-H(p))\right\rfloor$\;
	Compute the Bhattacharyya parameter of all quantum coordinate channels: $Z_{BP}$\;
	$ChannelOrder=Sort(Z_{BP},\ 'descend')$\;
	$FrozenQubitsIndex=ChannelOrder(1:N-k)$\;
	$LogicalQubitsIndex=ChannelOrder(N-k+1:N)$\;
	Initialize  $H=zeros(N-k,N)$, $H_L=zeros(k,N)$\;
	\For{$i=1$ to $N-k$}{
		$H\left [i,FrozenQubitsIndex[i]\right]=1$\;
	}
	\For{$i=1$ to $k$}{
		$H_L\left [i,LogicalQubitsIndex[i]\right]=1$\;
	}
	$H = H G_N$\;
	$H_L = H_L G_N^T$\;
\end{algorithm}

Under this stabilizer construction algorithm, our simulation results in Section \ref{5} show that the LERs increase with code lengths. It's obvious that the CA construction is equivalent to the construction in Ref. \cite{9366784}. According to Theorem \ref{theorem 1}, one can understand why we choose the information position by channel quality — this will guarantee us get a code with distance as large as possible. 

\subsection{Block selection construction} %– stabilizer construction algorithm based on the ranking of quantum coordinate channels according to the weight of their corresponding stabilizer generators
\label{}
According to Theorem \ref{theorem 1}, if the channel quality of $i$th quantum coordinate channel is worse than that of the $j$th, the weight of the corresponding stabilizer generator of the $i$th must be no less than that of the $j$th. Using this, we can rank quantum coordinate channels according to the weight of their corresponding stabilizer generators in a block-by-block way. The second algorithm called block selection is based on this ranking.

For arbitrary stabilizer code, its code distance, which is defined as the minimum weight of its logical operators, determines the number of error qubits it can correct. Increasing its code distance will help to improve its error correcting capability. Given a certain code length, more logical qubits (namely, less stabilizer generators) may lead to smaller code distance. Hence, the goal of block selection algorithm is to increase the number of stabilizer generators to increase the code distance of QPSCs. 

For bit-flip channel $\mathcal{E}$ with error probabilit $p$, under the CA algorithm, the number of logical qubits of QPSC is $k=\left\lfloor N(1-H(p))\right\rfloor$. We rank the quantum coordinate channels in descending order according to the weight of their corresponding stabilizer generators. Since there might be more than one coordinate channels with the same stabilizer generator weight, this ranking method will rank the coordinate channels block by block, and those coordiante channels in the same block have the same stabilzer generator weight. Then we assume that the weight of stabilizer generators of $\left( N-k\right)$th quantum coordinate channel is $2^x\ (0\le x\le n=\log_2{N})$. According to Lemma \ref{lemma3}, the minimum weight of logical operators is $2^{n-x}$. Notice that there are $\binom{n}{x}$ quantum coordinate channels that the weight of their corresponding stabilizer generators is equal to $2^x$. If these $\binom{n}{x}$ channels are all chosen to transmit frozen qubits, the total number of stabilizer generators will be $\sum_{m=x}^{n}\binom{n}{m}$ and the minimum weight of logical operators will increase to $2^{n-x+1}$. We can see that this can help to improve the error correcting capability of QPSCs, since it increases the number of stabilizer generators and the code distance. Notice that we chose quantum coordinate channels in a block-wise way. If a coordiante channel is chosen to transmit frozen/data qubits, the whole block it belongs to will be chosen to transmit frozen/data. Hence this algorithm is called block selection. The detailed process is as follows.

\textbf{Step 1}: For bit-flip channel $\mathcal{E}$ with error probability $p$, compute its symmetric coherent information $I(\rho=\frac{1}{2}\ket{0}\bra{0}+\frac{1}{2}\ket{1}\bra{1},\mathcal{E})=1-H(p)$, where $H(\cdot)$ is the Shannon entropy.

\textbf{Step 2}: Determine code length $N$ and compute $k^\prime=\left\lfloor N(1-H(p))\right\rfloor$. We emphasize that $k^\prime$ is not the number of logical qubits.

\textbf{Step 3}: Through Eq. \eqref{22}, compute the stabilizer generator of each quantum coordinate channel, and rank these channels in descending order according to the weight of their corresponding stabilizer generators.

\textbf{Step 4}: Find the $(N-k^\prime)$th quantum coordinate channel and computing the weight of its corresponding stabilizer generator. Suppose the weight is $2^x\ (0\le x\le n=\log_2{N})$, then choose all quantum coordinate channels that the weight of their corresponding stabilizer generators is greater than or equal to $2^x$ to transmit frozen qubits. We can see that the number of frozen qubits is $\sum_{a=x}^{n}\binom{n}{a}$.

\textbf{Step 5}: Compute the number of logical qubits $k=N-\sum_{a=x}^{n}\binom{n}{a}$, and the former $N-k$ quantum coordinate channels are used to transmit frozen qubits which are set to state $\ket{0}^{\otimes N-k}$ and the last $k$ channels to transmit logical qubits.

\textbf{Step 6}: Use Eq. \eqref{22} and Eq. \eqref{23} to compute the row vector representation of each logical operator and stabilizer generator.

The pseudocode is shown in Algorithm \ref{algorithm2}.
\begin{algorithm}[bt]
	\caption{Block selection construction}
	\label{algorithm2}
	\LinesNumbered
	\KwIn{primal error rate $p$, code length parameter $n$}
	\KwOut{parity-check matrix $H$, logical operator matrix $H_L$}
	CodeLength $N=2^n$\;
	$k^\prime=\left\lfloor N\times (1-H(p))\right\rfloor$\;
	Initialize  $S=zeros(N,N)$\;
	\For{$i=1$ to $N$}{
		$S\left [i,i\right]=1$\;
	}
	$S = SG_N$\;
	Compute the weight of all rows of $S$ $Weight$\;
	$ChannelOrder=Sort(Weight,\ 'descend')$\;
	$w=ChannelOrder[k^\prime]$\;
	$x=\log_2{w}$\;
	$k=\sum_{a=x}^{n}\binom{n}{a}$\;
	$FrozenQubitsIndex=ChannelOrder(1:N-k)$\;
	$LogicalQubitsIndex=ChannelOrder(N-k+1:N)$\;
	Initialize $H=zeros(N-k,N)$, $H_L=zeros(k,N)$\;
	\For{$i=1$ to $N-k$}{
		$H\left [i,FrozenQubitsIndex[i]\right]=1$\;
	}
	\For{$i=1$ to $k$}{
		$H_L\left [i,LogicalQubitsIndex[i]\right]=1$\;
	}
	$H = HG_N$\;
	$H_L = H_LG_N^T$\;
\end{algorithm}

One can see that given the same error probability $p$ the coding rate of CA is greater than or equal to that of BS, and the number of stabilizer generators and hence the code distance of CA is no more than those of BS. Thus, intuitively, the error correcting capability of BS algorithm is greater than or equal to that of CA algorithm. However, the simulation results in Section \ref{5} don't show this feature so obviously. This is because in the range of error probability we consider, the average number of error qubits in a single simulation test is much lower than code distance for both BS and CA. Hence, the error correcting capability of BS and CA algorithm is similar with each other.

\section{Simulation results and analysis}
\label{5}

There is an important concept of QECCs -- noise threshold. For a certain class of QECCs, when the error probability is lower than its noise threshold, we can decrease the LER of single logical qubit $LER_{lq}$ by increasing the code length. When the error probability is higher than the noise threshold, the $LER_{lq}$ will increase with the code length growth, which means this QECC fails. If a class of codes work for quantum computing, they should have such threshold. 

To test the error correcting capability of QPSCs and find out wheter they have such threshold, we perform simulations. In the range of $p$ from $1 \times 10^{-5}$ to $1 \times 10^{-2}$, under both unreliable and reliable frozen qubits assumptions, no evidence shows that the LER of single logical qubit $LER_{lq}$ of both CA and BS can be decreased by increasing the code length and we cannot find the noise threshold, which means these QPSCs don't work for quantum computing. Notice that reliable frozen qubits corresponding to reliable entanglement qubits in the Ref. \cite{9366784}.

\subsection{Simulation results}
\label{5.1}

In the simulations, we assume that the error syndrome measurement is perfect and perform $100000$ simulations for each data point. Besides, two decoders are used. The first decoders is named table-look-up decoders which is realized by creating a syndrome-to-error lookup table that exhaustively list the best recovery operation for each error syndrome. The time cost and memory space of creating such complete syndrome-to-error lookup table rapidly become intractable as code length growing. To reduce the time cost and memory space, we only create an approximate incomplete table which lists the error whose the number of error qubits is within $t_{error}$, where $t_{error}$ is a parameter to control the approximation and larger than the average number of error qubits $Np_{0}$ in a single simulation test, where $N$ is the code length, $p$ is the qubit error rate. This is because in each simulation the probability of the events that the number of error qubits doesn't equal to $Np_{0}$ tends to zero as $N$ grows. We also use another decoder called bit-flip decoder whose time cost and memory space is linear with the code length. The simulation results show that in the low error rate regime from $1 \times 10^{-5}$ to $1 \times 10^{-2}$, the error correcting performance of these two decoders is close. However, we've observed that bit-flip decoder is sensitive to the primal error rate $p$, and it may fail in the high error rate regime with code length increasing. The details of bit-flip decoder are shown in Appendix \ref{appendixA}.

Fig. \ref{fig4} shows the coding rate of CA and BS algorithms with different physical qubit error rate and code length. We can see that in the range of $p$ from $1 \times 10^{-5}$ to $1 \times 10^{-4}$, under different code length considered in the simulations, the coding rate of CA is the same as that of BS. Hence, in the range of $p$ from $1 \times 10^{-5}$ to $1 \times 10^{-4}$, we only need to perform simulations under CA algorithms.

Fig. \ref{fig5} shows the simulation results with unreliable frozen qubits and table-look-up decoder. Fig. \ref{fig6} shows the simulation results with unreliable frozen qubits and bit-flip decoder. It should be noticed that the LER of code blocks $LER_{block}$, which is the rate of code blocks with logical error, is directly obtained by counting the simulation results, while the LER of single logical qubit $LER_{lq}$ is calculated from $LER_{block}$ by the following equation. These two decoders have similar performance in the range of $p$ from $1 \times 10^{-5}$ to $1 \times 10^{-2}$, which is shown in Fig. \ref{fig7}. In Fig. \ref{fig7}, the parameter $t_{error}$ which is used to control the approximation accuracy is set to 4 in the range of $p$ from $1 \times 10^{-3}$ to $1 \times 10^{-2}$ and 2 in the range of $p$ from $1 \times 10^{-5}$ to $1 \times 10^{-3}$. The performance of table-look-up decoder under CA algorithm with different $t_{error}$ is shown in Fig. \ref{fig8}. Fig. \ref{fig9} shows the simulation results with reliable frozen qubits and table-look-up decoder. Fig. \ref{fig10} shows the simulation results with reliable frozen qubits and bit-flip decoder. As shown in Fig. \ref{fig5}, Fig. \ref{fig6}, Fig. \ref{fig9} and Fig. \ref{fig10}, in the range of $p$ from $1 \times 10^{-5}$ to $1 \times 10^{-2}$, there is no evidence showing that the LER of single logical qubit $LER_{lq}$ can be decreased by increasing the code length. There are several points seems abnormal in Fig. \ref{fig5} (d), (f), (g), (i) and Fig. \ref{fig10} (g), (i). This is because of the abrupt change of the coding rate of BS algorithms.
\begin{equation}
	\label{}
	(1-LER_{lq})^{k} = 1-LER_{block}
\end{equation}

\begin{figure*}[htbp]
	\centering
	\includegraphics[width=1\textwidth]{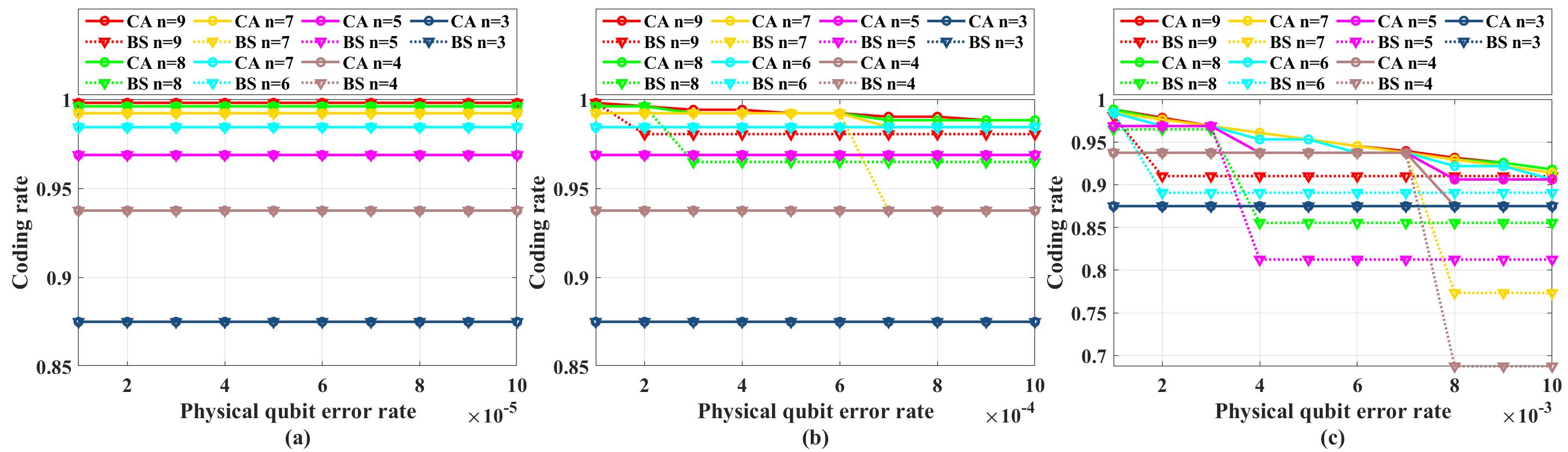}
	\caption{The coding rate of CA and BS algorithms with different physical qubit error rate and code length $N=2^n$.}
	\label{fig4}
\end{figure*}

\begin{figure*}[htbp]
	\centering
	\begin{minipage}{1\linewidth}
		\centering
		\includegraphics[width=1\linewidth]{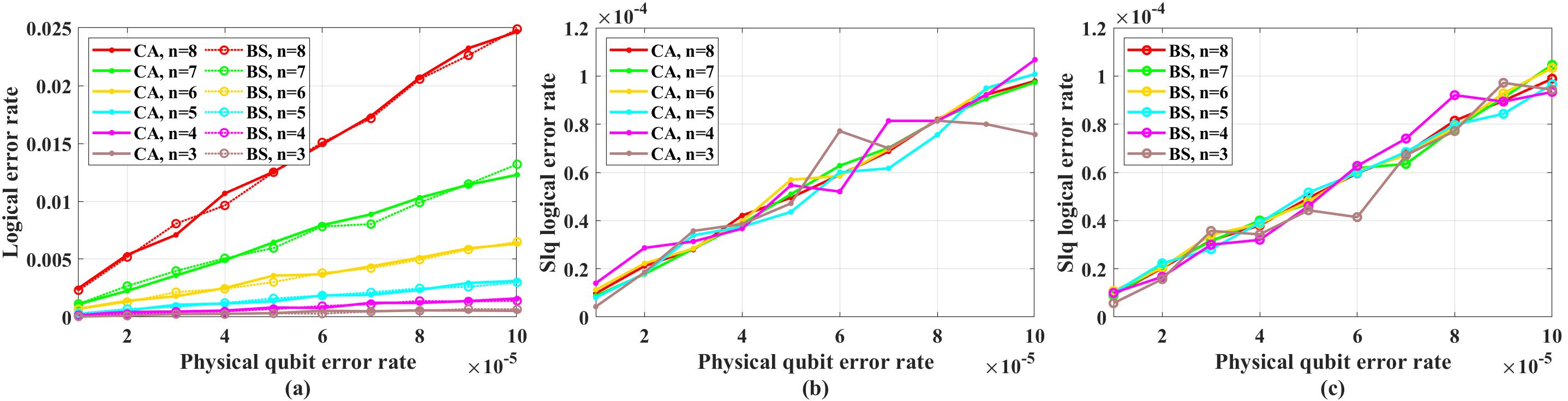}
	\end{minipage}
	%\qquad
	\begin{minipage}{1\linewidth}
		\centering
		\includegraphics[width=1\linewidth]{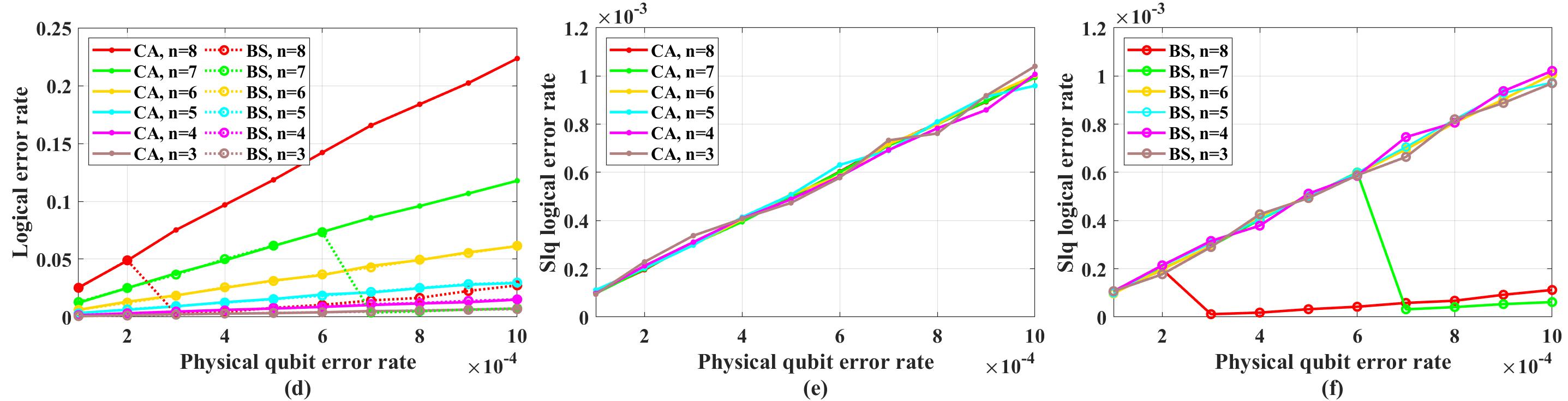}
	\end{minipage}
	%\qquad
	\begin{minipage}{1\linewidth}
		\centering
		\includegraphics[width=1\linewidth]{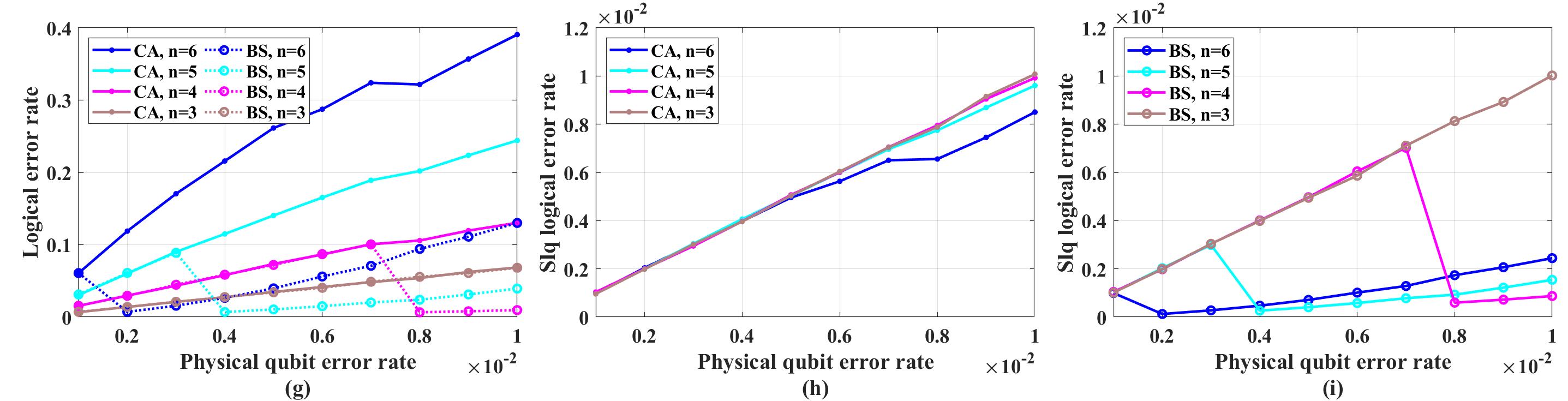}
	\end{minipage}
	\caption{The LER with unreliable frozen qubits and table-look-up decoder.}
	\label{fig5}
\end{figure*}

\begin{figure*}[htbp]
	\centering
	\begin{minipage}{1\linewidth}
		\centering
		\includegraphics[width=1\linewidth]{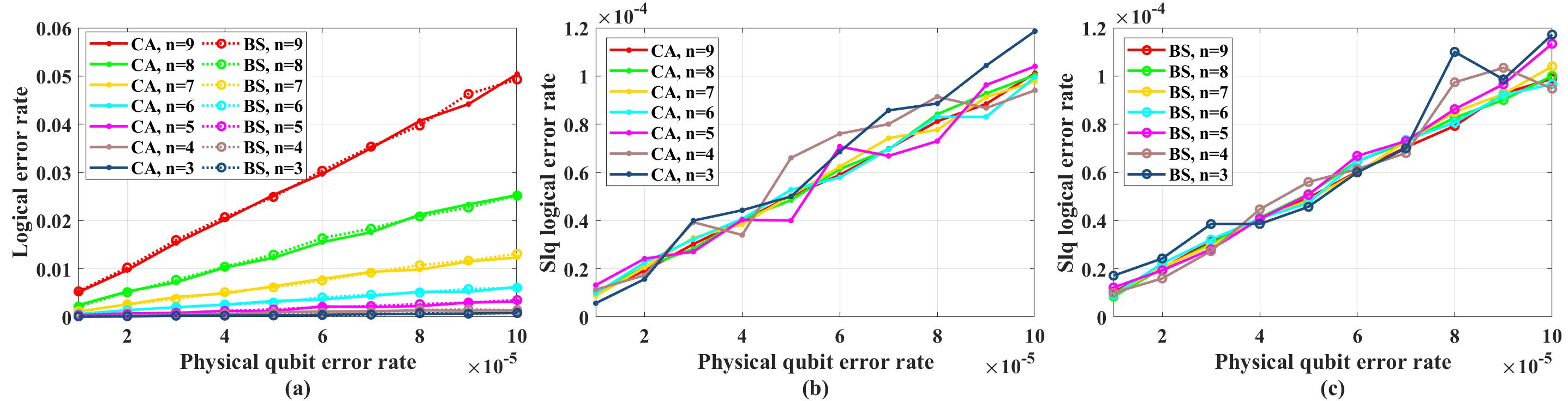}
	\end{minipage}
	%\qquad
	\begin{minipage}{1\linewidth}
		\centering
		\includegraphics[width=1\linewidth]{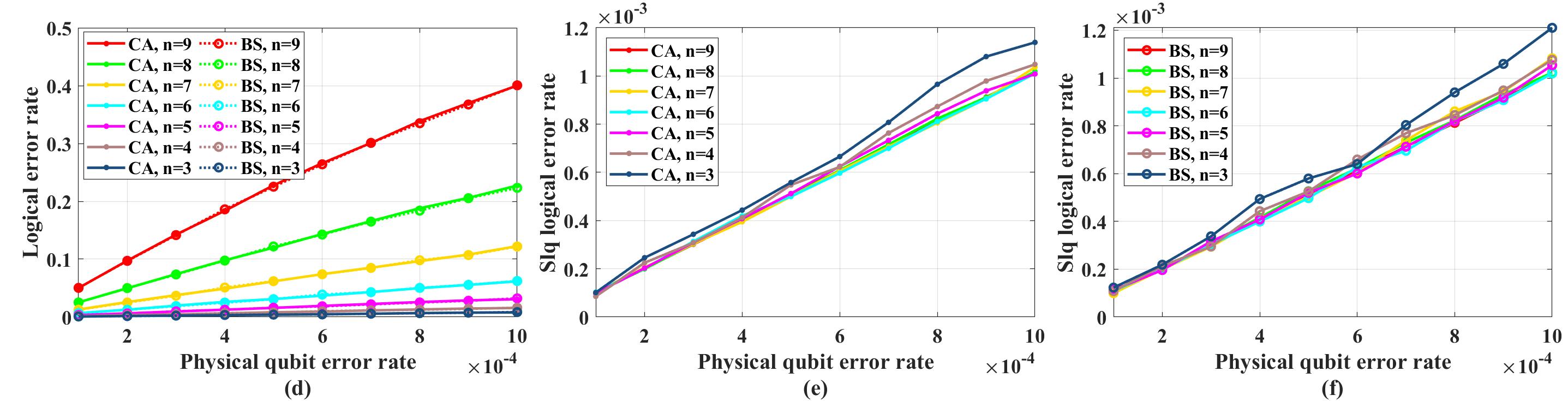}
	\end{minipage}
	%\qquad
	\begin{minipage}{1\linewidth}
		\centering
		\includegraphics[width=1\linewidth]{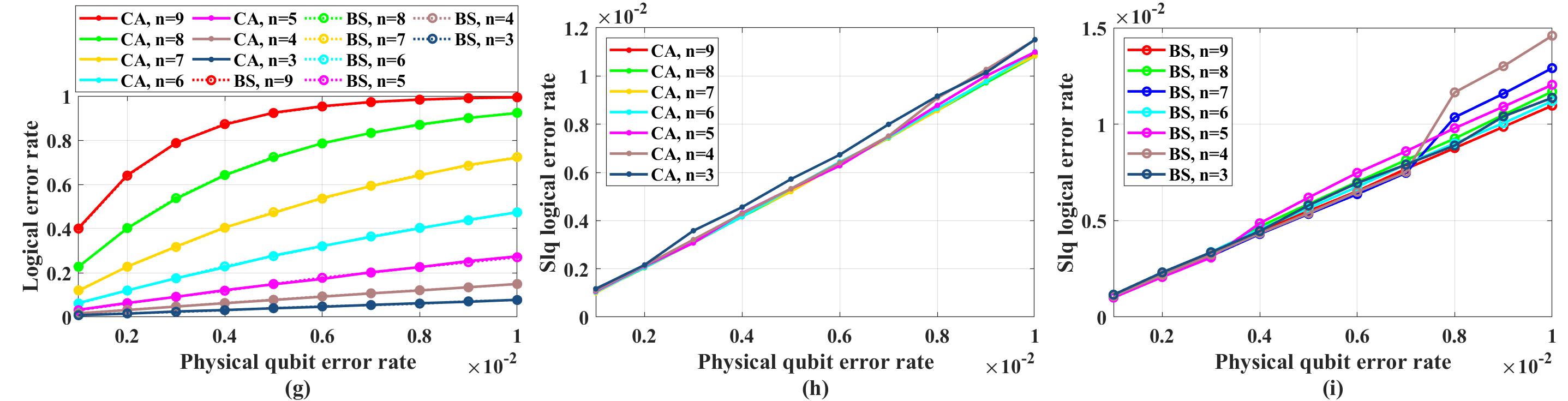}
	\end{minipage}
	\caption{The LER with with unreliable frozen qubits and bit-flip decoder.}
	\label{fig6}
\end{figure*}

\begin{figure*}[htbp]
	\centering
	\includegraphics[width=1\textwidth]{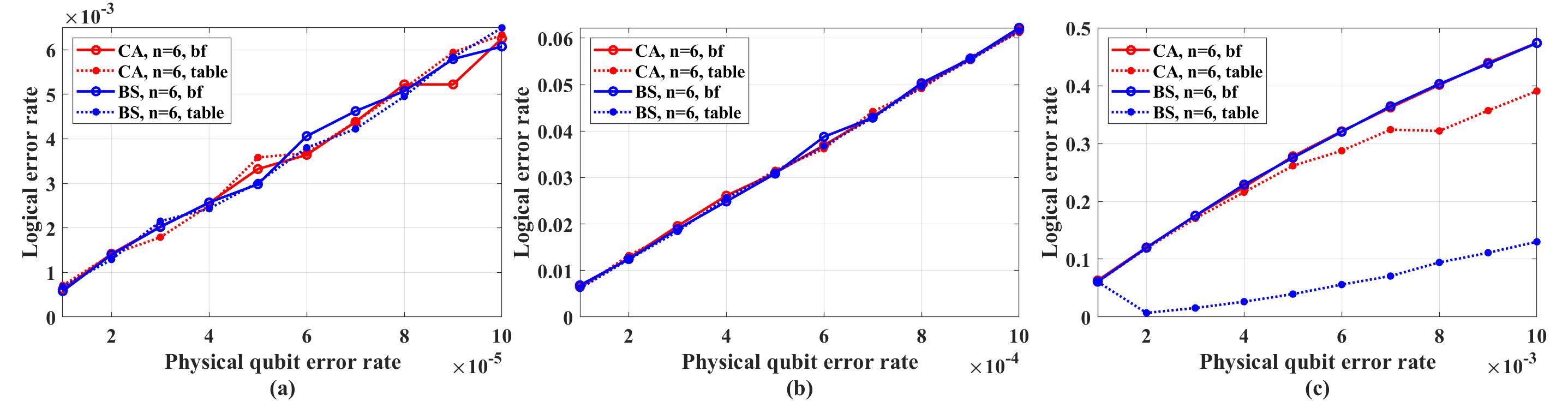}
	\caption{The performance of table-look-up decoder and bit-flip decoder in the range of $p$ from $1 \times 10^{-5}$ to $1 \times 10^{-2}$. The code length is set to $N=2^{6}$.}
	\label{fig7}
\end{figure*}

\begin{figure*}[htbp]
	\centering
	\includegraphics[width=0.7\textwidth]{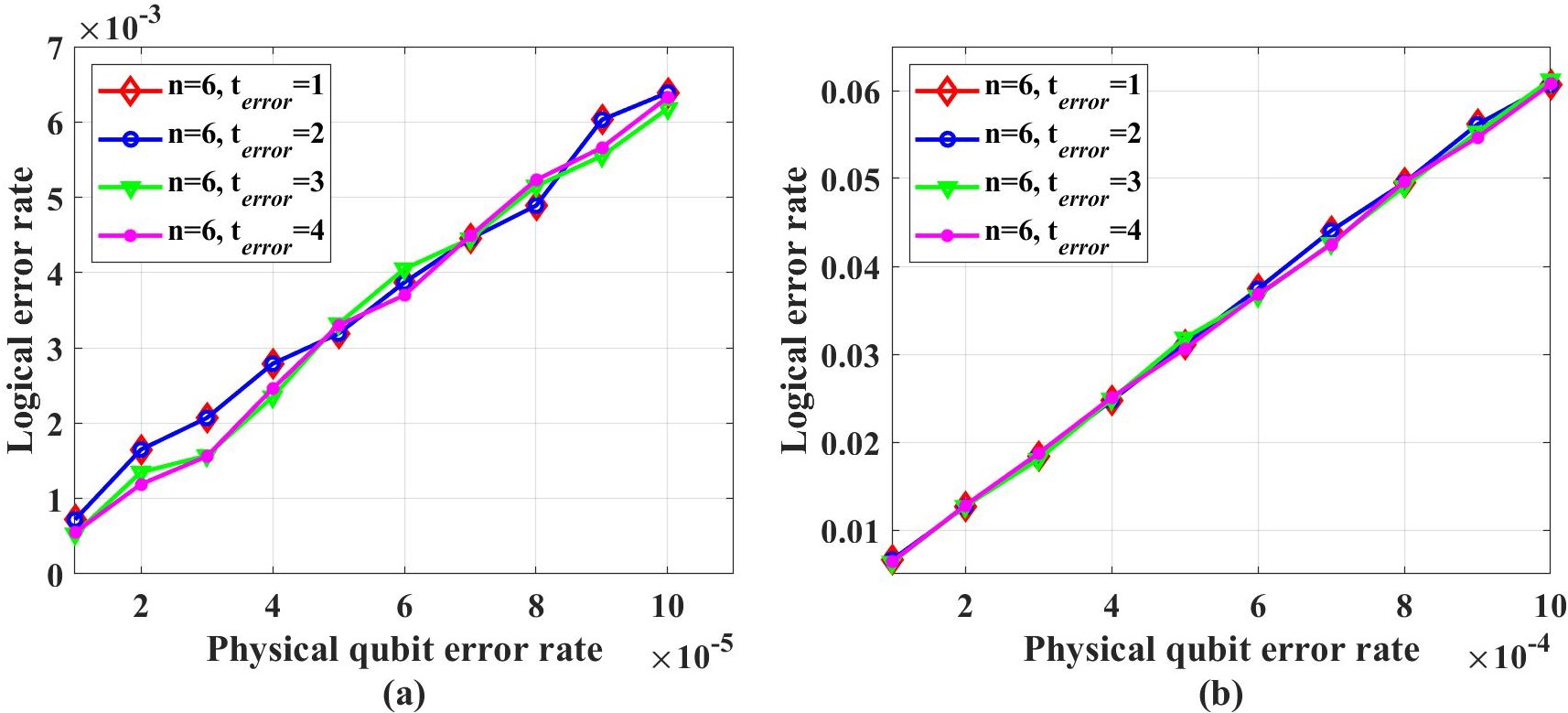}
	\caption{The performance of table-look-up decoder under CA algorithm with different $t_{error}$.}
	\label{fig8}
\end{figure*}

\begin{figure*}[htbp]
	\centering
	\begin{minipage}{1\linewidth}
		\centering
		\includegraphics[width=1\linewidth]{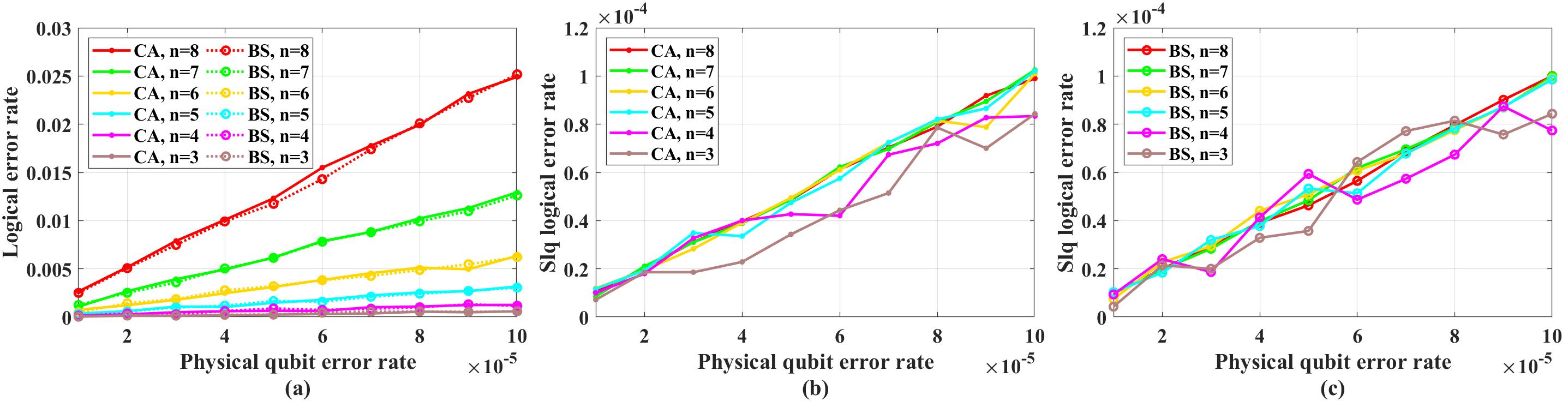}
	\end{minipage}
	%\qquad
	\begin{minipage}{1\linewidth}
		\centering
		\includegraphics[width=1\linewidth]{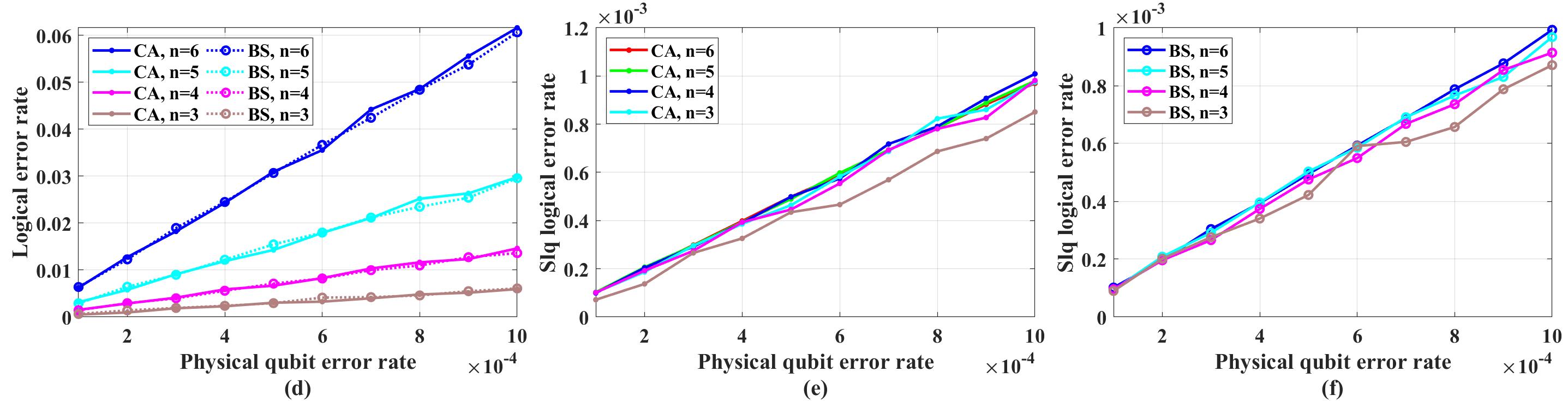}
	\end{minipage}
	%\qquad
	\begin{minipage}{1\linewidth}
		\centering
		\includegraphics[width=1\linewidth]{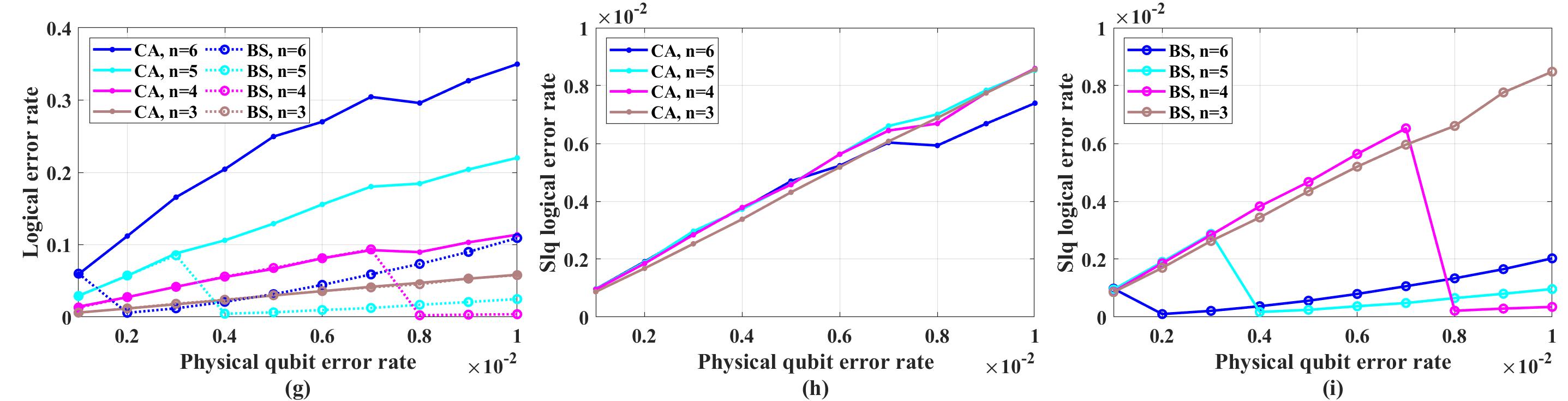}
	\end{minipage}
	\caption{The LER with reliable frozen qubits and table-look-up decoder.}
	\label{fig9}
\end{figure*}

\begin{figure*}[htbp]
	\centering
	\begin{minipage}{1\linewidth}
		\centering
		\includegraphics[width=1\linewidth]{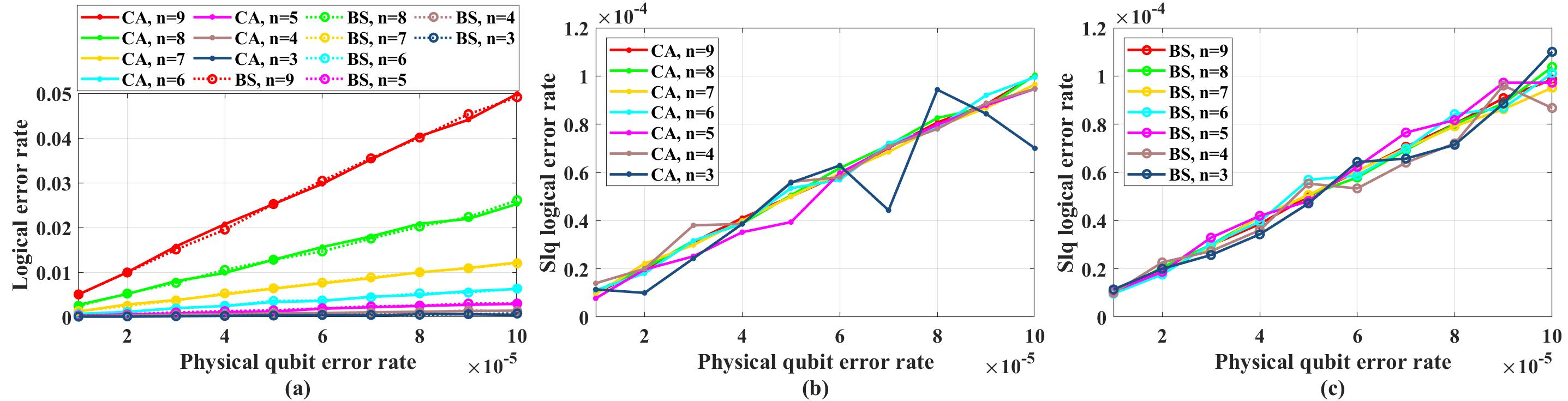}
	\end{minipage}
	%\qquad
	\begin{minipage}{1\linewidth}
		\centering
		\includegraphics[width=1\linewidth]{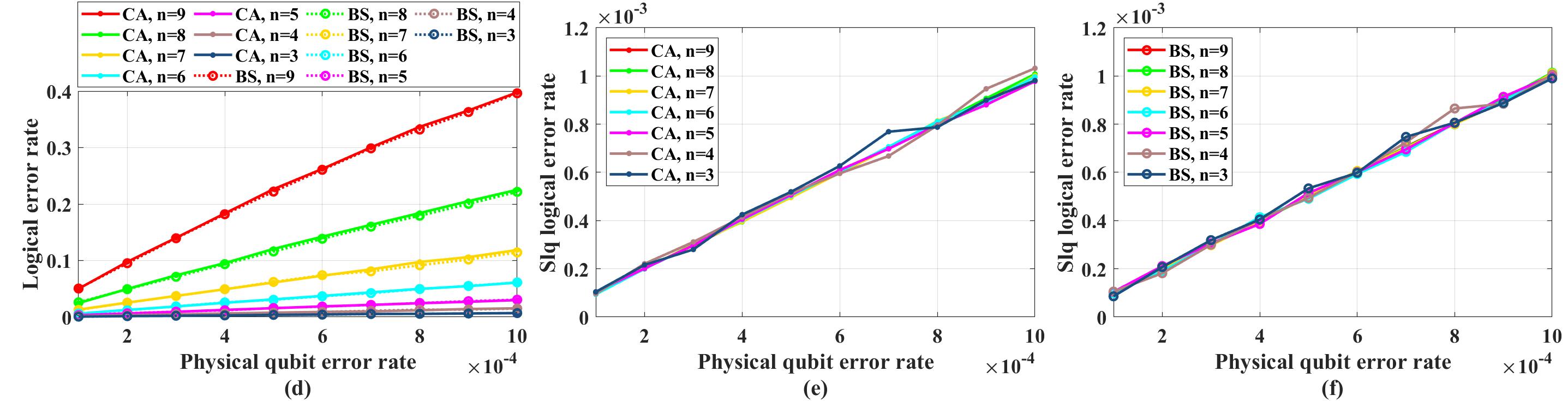}
	\end{minipage}
	%\qquad
	\begin{minipage}{1\linewidth}
		\centering
		\includegraphics[width=1\linewidth]{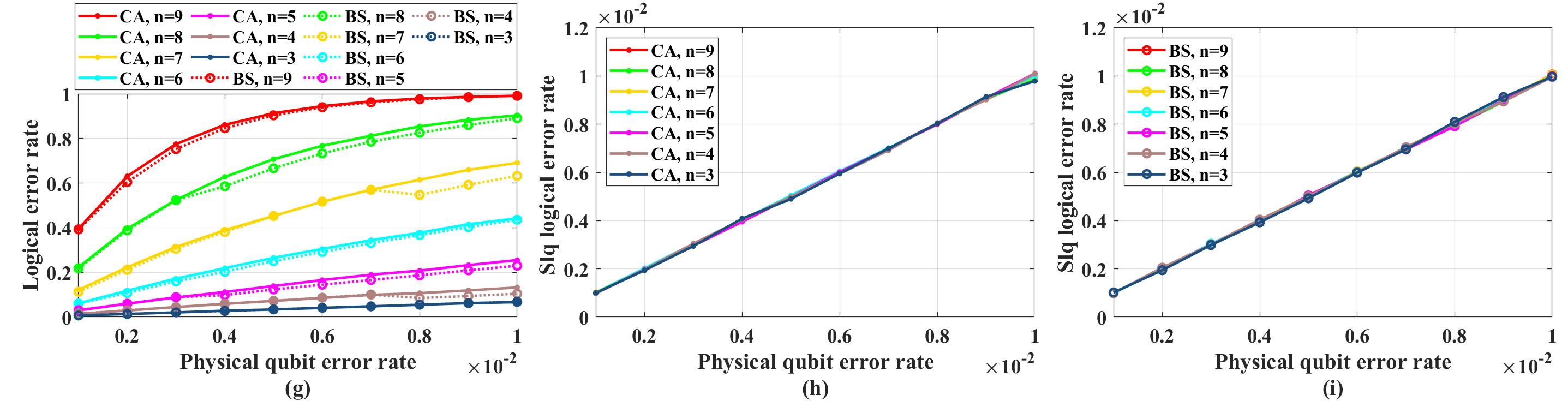}
	\end{minipage}
	\caption{The LER with reliable frozen qubits and bit-flip decoder.}
	\label{fig10}
\end{figure*}

\subsection{Analysis}
\label{5.2}
Why QPSCs don't work for quantum computing? We think this is due to the decoding channels and coordinate channels having no one-to-one relationship.

\begin{figure}[htbp]
	\centering
	\includegraphics[width=0.48\textwidth]{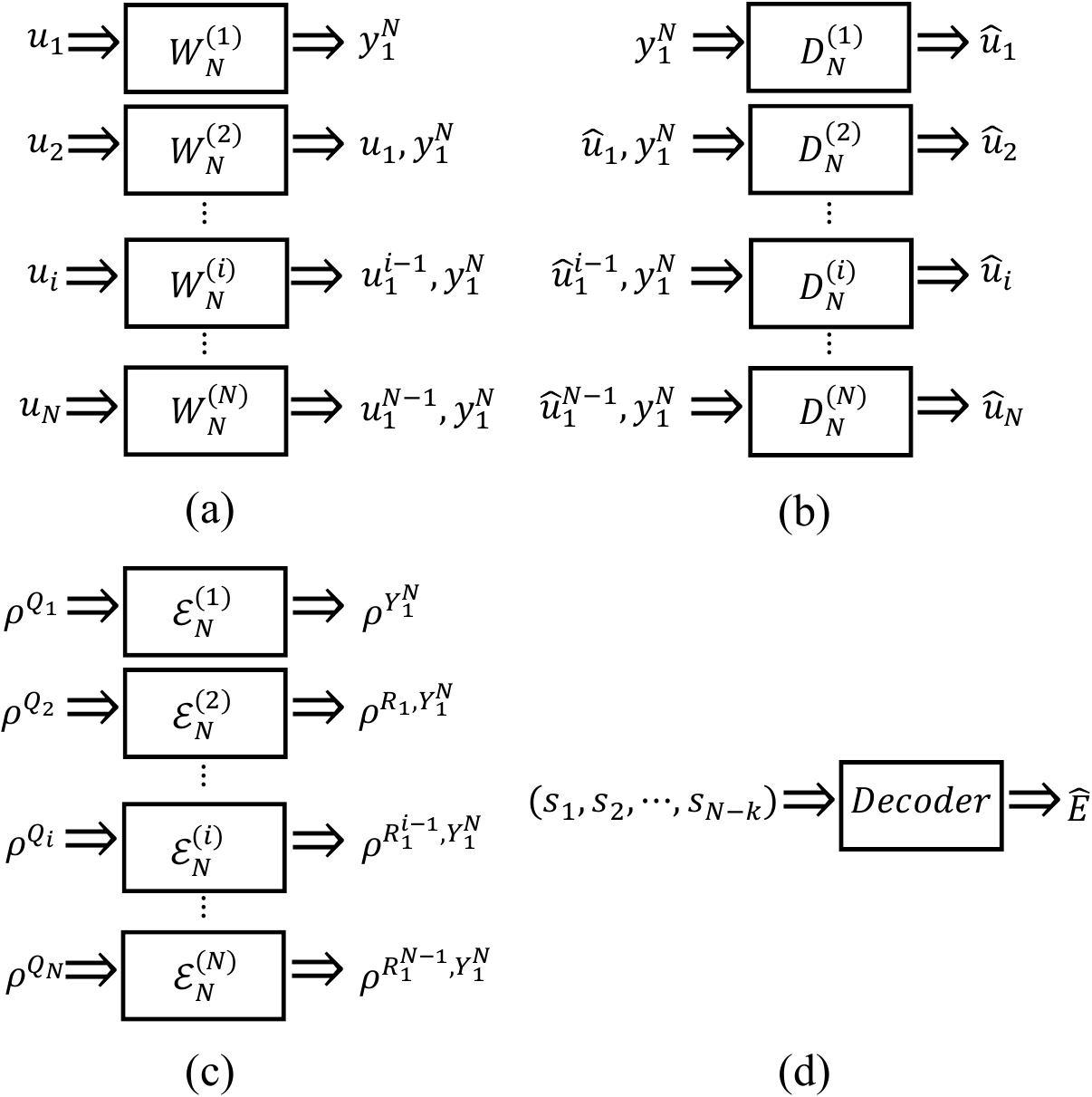}
	\caption{Coordinate channels and decoding channels. (a) Classical coordinate channels. The input of classical coordinate channel $W_N^{(i)}$ is $u_i$, and its output is $y_1^N,u_1^{i-1}$. (b) Classical decoding channels. The input of decoding channel $D_N^{(i)}$ is $\hat{u}_1^{i-1},y_1^N$ and the output is estimated $\hat{u}_i$. (c) Quantum coordinate channels. The input of quantum coordinate channel $\mathcal{E}_N^{(i)}$ is $\rho^{Q_i}$, and its output is $\rho^{Y_1^N,R_1^{i-1}}$. (d) Quantum decoding channel. The input of decoding channel is an error syndrome and the output is the most likely error.} 
	\label{fig11}
\end{figure}

For classical polar codes, as shown in Fig. \ref{fig11}(a), the input of classical coordinate channel $W_N^{(i)}$ is $u_i$, and its output is $y_1^N,u_1^{i-1}$. During the decoding procedure, as shown in Fig. \ref{fig11}(b), decoder estimates $\hat{u}_i$ after observing $y_1^N$ and the past estimated channel inputs $\hat{u}_1^{i-1}$. Hence, we can see that the classical polar code construction builds a one-to-one relationship between the classical coordinate channels and decoding channels. The better the quality of the coordinate channels are, the more reliable of the decoding channels will be.

For QPSCs, as shown in Fig. \ref{fig11}(c), the input of quantum coordinate channel $\mathcal{E}_N^{(i)}$ is $\rho^{Q_i}$, and its output is $\rho^{Y_1^N,R_1^{i-1}}$. However, for quantum decoding channel as shown in Fig. \ref{fig11}(d), its input is an error syndrome and its output is the most likely error. Both CA and BS construction algorithms don't build a similar one-to-one relationship between the quantum coordinate channels and decoding channel.

\section{Possible future direction}
\label{6}
Based on the analysis in Section \ref{5.2}, we argue that if researchers still try to borrow the idea of classical polar codes to design quantum stabilizer codes with high coding rate, the channels waiting to be analyzed and polarized should be the channels with the actual error on physical qubits as input and error syndrome as output, which corresponds to the channels in Fig. \ref{fig11}(d). What do these channels look like? The channels between the actual error on physical qubits and error syndrome can be depicted by the Tanner graph \cite{1056404} as shown in Fig. \ref{fig12}. However, it doesn’t seem direct to “polarize” the Tanner graph by borrowing the idea of classical polar codes from the perspective of channel capacity, since the number of inputs (variable nodes) and outputs (check nodes) is different and each output may connect to more than one input, which leads to the difficulty in defining the primal channel used to realize polarization and analyzing the channel capacity.
\begin{figure}[htbp]
	\centering
	\includegraphics[width=0.2\textwidth]{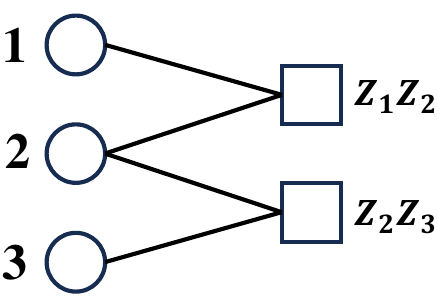}
	\caption{The Tanner graph of 3-bit-flip code.}
	\label{fig12}
\end{figure}

Arikan\cite{5075875} uses the concept of channel capacity to design the recursive encoding of classical polar codes. It does no harm to put the concept of channel capacity of classical polar codes aside for a moment and focus on the recursive encoding itself. By the recursive encoding, one can expand the encoding circuits and obtain classical polar codes with longer code length and stronger error correcting capability from two shorter classical polar codes. Hence, can we recursively expand the Tanner graph of certain stabilizer codes by the idea of recursive encoding and obtain stabilizer codes with longer code length and stronger error correcting capability. We have made some attempts in pure Pauli X noise channel and propose a class of stabilizer codes with constant coding rate 0.5 by recursively expanding the Tanner graph, which is shown in Fig. \ref{fig13}. The scheme also applies to pure Pauli Z and Y noise channel.

\begin{figure*}[htbp]
	\centering
	\includegraphics[width=0.8\textwidth]{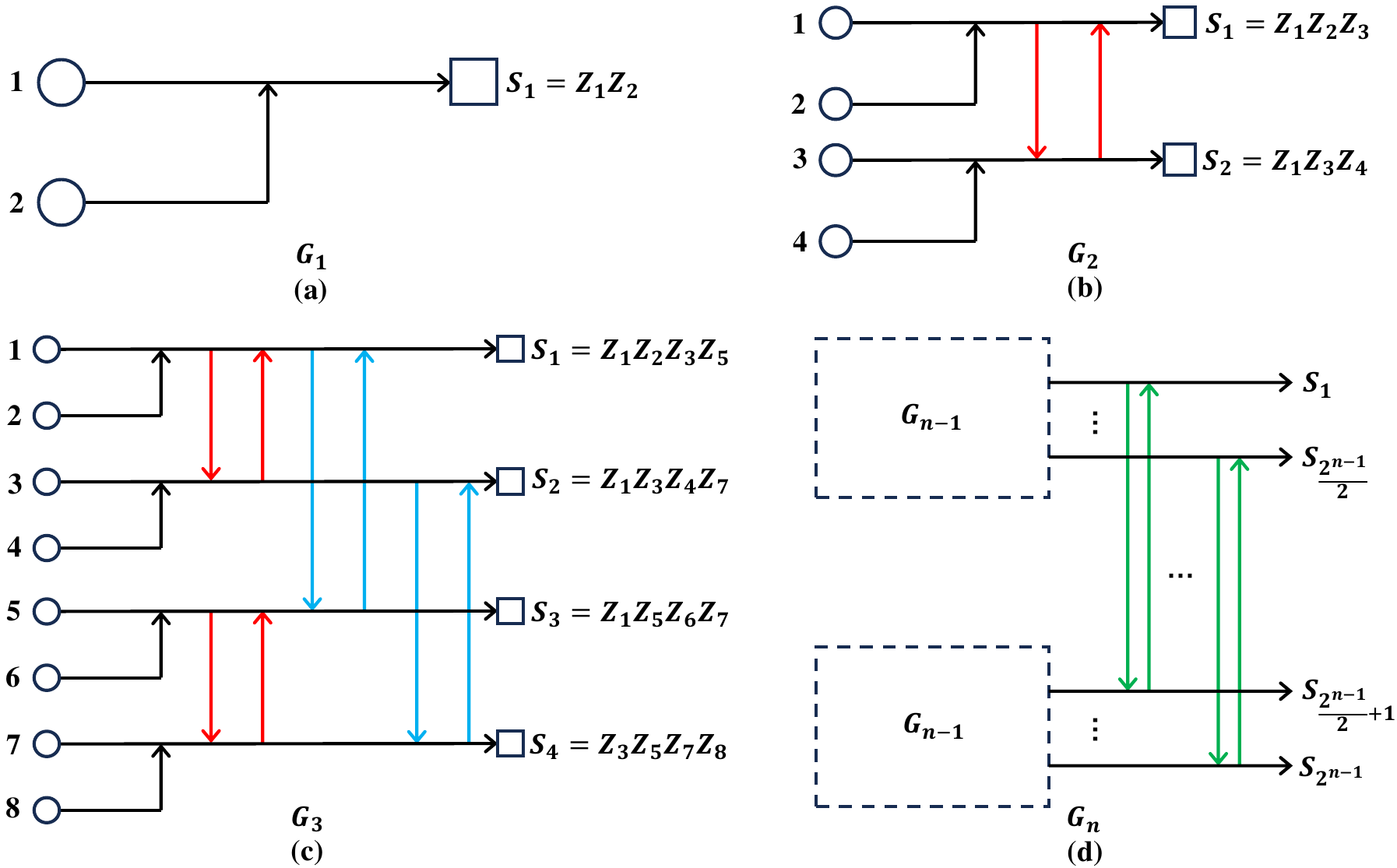}
	\caption{Quantum stabilizer codes by the recursive expansion of Tanner graph. The arrow means the corresponding qubit it starts from will join in the corresponding stabilizer it ends with. The corresponding stabilizers and logical operators are shown in Table \ref{table1}. (a) The Tanner graph $G_1$ used to recursive expansion. (b) The expanded Tanner graph $G_2$ by recursive expansion of two $G_1$. (c) The expanded Tanner graph $G_3$ by recursive expansion of two $G_2$. (d) The expanded Tanner graph $G_n$ by recursive expansion of two $G_{n-1}$}
	\label{fig13}
\end{figure*}

\begin{table*}
	\begin{center}
	\caption{The stabilizer generators and corresponding logical operators when code length $N=4, 8, 16, 32$.}
	\label{table1}
	\begin{tabular}{c|c|c|c}
		\textbf{Code length} & \ \textbf{Stabilizer generators} & \ \textbf{Logical X operators} & \ \textbf{Logical Z operators}\\
		\hline
		\multirow{2}{*}{4} & $S_1=Z_{1}Z_{2}Z_{3}$ & $\bar{X}_1=X_{1}X_{2}X_{4}$ & $\bar{Z}_1=Z_{1}$ \\
		& $S_2=Z_{1}Z_{3}Z_{4}$ & $\bar{X}_2=X_{2}X_{3}X_{4}$ & $\bar{Z}_2=Z_{3}$\\
		\hline
		\multirow{4}{*}{8} & $S_1=Z_{1}Z_{2}Z_{3}Z_{5}$ & $\bar{X}_1=X_{1}X_{2}X_{3}X_{5}$ & $\bar{Z}_1=Z_{2}$ \\
		& $S_2=Z_{1}Z_{3}Z_{4}Z_{7}$ & $\bar{X}_2=X_{1}X_{3}X_{4}X_{7}$ & $\bar{Z}_2=Z_{4}$ \\
		& $S_3=Z_{1}Z_{5}Z_{6}Z_{7}$ & $\bar{X}_3=X_{1}X_{5}X_{6}X_{7}$ & $\bar{Z}_3=Z_{6}$ \\
		& $S_4=Z_{3}Z_{5}Z_{7}Z_{8}$ & $\bar{X}_4=X_{3}X_{5}X_{7}X_{8}$ & $\bar{Z}_4=Z_{8}$ \\
		\hline
		\multirow{8}{*}{16} & $S_1=Z_{1}Z_{2}Z_{3}Z_{5}Z_{9}$ & $\bar{X}_1=X_{1}X_{2}X_{4}X_{6}X_{10}$ & $\bar{Z}_1=Z_{1}$ \\
		& $S_2=Z_{1}Z_{3}Z_{4}Z_{7}Z_{11}$ & $\bar{X}_2=X_{2}X_{3}X_{4}X_{8}X_{12}$ & $\bar{Z}_2=Z_{3}$ \\
		& $S_3=Z_{1}Z_{5}Z_{6}Z_{7}Z_{13}$ & $\bar{X}_3=X_{2}X_{5}X_{6}X_{8}X_{14}$ & $\bar{Z}_3=Z_{5}$ \\
		& $S_4=Z_{3}Z_{5}Z_{7}Z_{8}Z_{15}$ & $\bar{X}_4=X_{4}X_{6}X_{7}X_{8}X_{16}$ & $\bar{Z}_4=Z_{7}$ \\
		& $S_5=Z_{1}Z_{9}Z_{10}Z_{11}Z_{13}$ & $\bar{X}_5=X_{2}X_{9}X_{10}X_{12}X_{14}$ & $\bar{Z}_5=Z_{9}$ \\
		& $S_6=Z_{3}Z_{9}Z_{11}Z_{12}Z_{15}$ & $\bar{X}_6=X_{4}X_{10}X_{11}X_{12}X_{16}$ & $\bar{Z}_6=Z_{11}$ \\
		& $S_7=Z_{5}Z_{9}Z_{13}Z_{14}Z_{15}$ & $\bar{X}_7=X_{6}X_{10}X_{13}X_{14}X_{16}$ & $\bar{Z}_7=Z_{13}$ \\
		& $S_8=Z_{7}Z_{11}Z_{13}Z_{15}Z_{16}$ & $\bar{X}_8=X_{8}X_{12}X_{14}X_{15}X_{16}$ & $\bar{Z}_8=Z_{15}$ \\
		\hline
		\multirow{16}{*}{32} & $S_1=Z_{1}Z_{2}Z_{3}Z_{5}Z_{9}Z_{17}$ & $\bar{X}_1=X_{1}X_{2}X_{3}X_{5}X_{9}X_{17}$ & $\bar{Z}_{1}=Z_{2}$ \\
		& $S_2=Z_{1}Z_{3}Z_{4}Z_{7}Z_{11}Z_{19}$ & $\bar{X}_2=X_{1}X_{3}X_{4}X_{7}X_{11}X_{19}$& $\bar{Z}_{2}=Z_{4}$\\
		& $S_3=Z_{1}Z_{5}Z_{6}Z_{7}Z_{13}Z_{21}$ & $\bar{X}_3=X_{1}X_{5}X_{6}X_{7}X_{13}X_{21}$& $\bar{Z}_{3}=Z_{6}$\\
		& $S_4=Z_{3}Z_{5}Z_{7}Z_{8}Z_{15}Z_{23}$ & $\bar{X}_4=X_{3}X_{5}X_{7}X_{8}X_{15}X_{23}$& $\bar{Z}_{4}=Z_{8}$\\
		& $S_5=Z_{1}Z_{9}Z_{10}Z_{11}Z_{13}Z_{25}$ & $\bar{X}_5=X_{1}X_{9}X_{10}X_{11}X_{13}X_{25}$& $\bar{Z}_{5}=Z_{10}$\\
		& $S_6=Z_{3}Z_{9}Z_{11}Z_{12}Z_{15}Z_{27}$ & $\bar{X}_6=X_{3}X_{9}X_{11}X_{12}X_{15}X_{27}$& $\bar{Z}_{6}=Z_{12}$\\
		& $S_7=Z_{5}Z_{9}Z_{13}Z_{14}Z_{15}Z_{29}$ & $\bar{X}_7=X_{5}X_{9}X_{13}X_{14}X_{15}X_{29}$
		& $\bar{Z}_{7}=Z_{14}$\\
		& $S_8=Z_{7}Z_{11}Z_{13}Z_{15}Z_{16}Z_{31}$ & $\bar{X}_8=X_{7}X_{11}X_{13}X_{15}X_{16}X_{31}$& $\bar{Z}_{8}=Z_{16}$\\
		& $S_9=Z_{1}Z_{17}Z_{18}Z_{19}Z_{21}Z_{25}$ & $\bar{X}_9=X_{1}X_{17}X_{18}X_{19}X_{21}X_{25}$& $\bar{Z}_{9}=Z_{18}$\\
		& $S_{10}=Z_{3}Z_{17}Z_{19}Z_{20}Z_{23}Z_{27}$ & $\bar{X}_{10}=X_{3}X_{17}X_{19}X_{20}X_{23}X_{27}$& $\bar{Z}_{10}=Z_{20}$\\
		& $S_{11}=Z_{5}Z_{17}Z_{21}Z_{22}Z_{23}Z_{29}$ & $\bar{X}_{11}=X_{5}X_{17}X_{21}X_{22}X_{23}X_{29}$& $\bar{Z}_{11}=Z_{22}$\\
		& $S_{12}=Z_{7}Z_{19}Z_{21}Z_{23}Z_{24}Z_{31}$ & $\bar{X}_{12}=X_{7}X_{19}X_{21}X_{23}X_{24}X_{31}$& $\bar{Z}_{12}=Z_{24}$\\
		& $S_{13}=Z_{9}Z_{17}Z_{25}Z_{26}Z_{27}Z_{29}$ & $\bar{X}_{13}=X_{9}X_{17}X_{25}X_{26}X_{27}X_{29}$& $\bar{Z}_{13}=Z_{26}$\\
		& $S_{14}=Z_{11}Z_{19}Z_{25}Z_{27}Z_{28}Z_{31}$ & $\bar{X}_{14}=X_{11}X_{19}X_{25}X_{27}X_{28}X_{31}$& $\bar{Z}_{14}=Z_{28}$\\
		& $S_{15}=Z_{13}Z_{21}Z_{25}Z_{29}Z_{30}Z_{31}$ & $\bar{X}_{15}=X_{13}X_{21}X_{25}X_{29}X_{30}X_{31}$& $\bar{Z}_{15}=Z_{30}$\\
		& $S_{16}=Z_{15}Z_{23}Z_{27}Z_{29}Z_{31}Z_{32}$ & $\bar{X}_{16}=X_{15}X_{23}X_{27}X_{29}X_{31}X_{32}$& $\bar{Z}_{16}=Z_{32}$\\
		\hline
	\end{tabular}
	
	\end{center}
	
\end{table*}

As shown in Fig. \ref{fig13}, the recursive expansion of Tanner graph has the same form as the recursive encoding circuits of classical polar codes (Fig. 8 in Ref. \cite{5075875}), and the coding rate of the corresponding code is a constant 0.5. For the minimum weight of logical X operators, if $\log N$ is an even number, it is $\log N$, while if $\log N$ is an odd number, it is $\log N+1$ (notice that in pure Pauli X noise, the error correcting capability relies on the minimum weight of logical X operators, rather than the minimum weight of all logical operators, namely, the code distance). The weight of stabilizer generators is the same as that of logical X operators. 

It is easy to prove the coding rate and the minimum weight of logical X operators. The code length $N=2^n$, and notice that when $n$ is odd number, each stabilizer generator contains a unique Pauli Z operator with even subscript, while when $n$ is even number, each stabilizer generator contains a unique Pauli Z operator with odd subscript.  Hence, each stabilizer generator can't be generated by other stabilizer generators, which means that when the code length is $N=2^n$, the $2^{n-1}$ stabilizer generators in Fig. \ref{fig13}(d) are independent. Therefore, the number of logical operators are $2^n-2^{n-1}=2^{n-1}$, which means the coding rate is 0.5. As for the minimum weight of logical X operators and stabilizer generators, one can easily prove this by mathematical induction — from the code length $N=4$ to $N=2^n$

To test the error correcting capability of the codes, simulations with table-look-up decoder are performed, whose results are shown in Fig. \ref{fig14}. Fig. \ref{fig14} shows that the logical error rate will decrease with the increase of code length within the physical qubit error rate from 0.001 to 0.01. In the simulations, $t_{error}$ is set to 5. However, due to the complexity of table-look-up decoder, we haven’t found the noise threshold of these codes. To find the noise threshold, simulations need to be performed in larger physical qubit error rate. With the increase of physical qubit error rate, the average number of error qubits will increase. According to the results in Fig. \ref{fig15}, for this class of codes, when $t_{error}$ is less than ten times of the average number of error qubits, the decoding accuracy is sensitive to $t_{error}$. Hence, to perform simulations under larger physical qubit error rate, $t_{error}$ needs take larger value, which leads the required memory become intractable.

\begin{figure*}[htbp]
	\centering
	\includegraphics[width=0.7\textwidth]{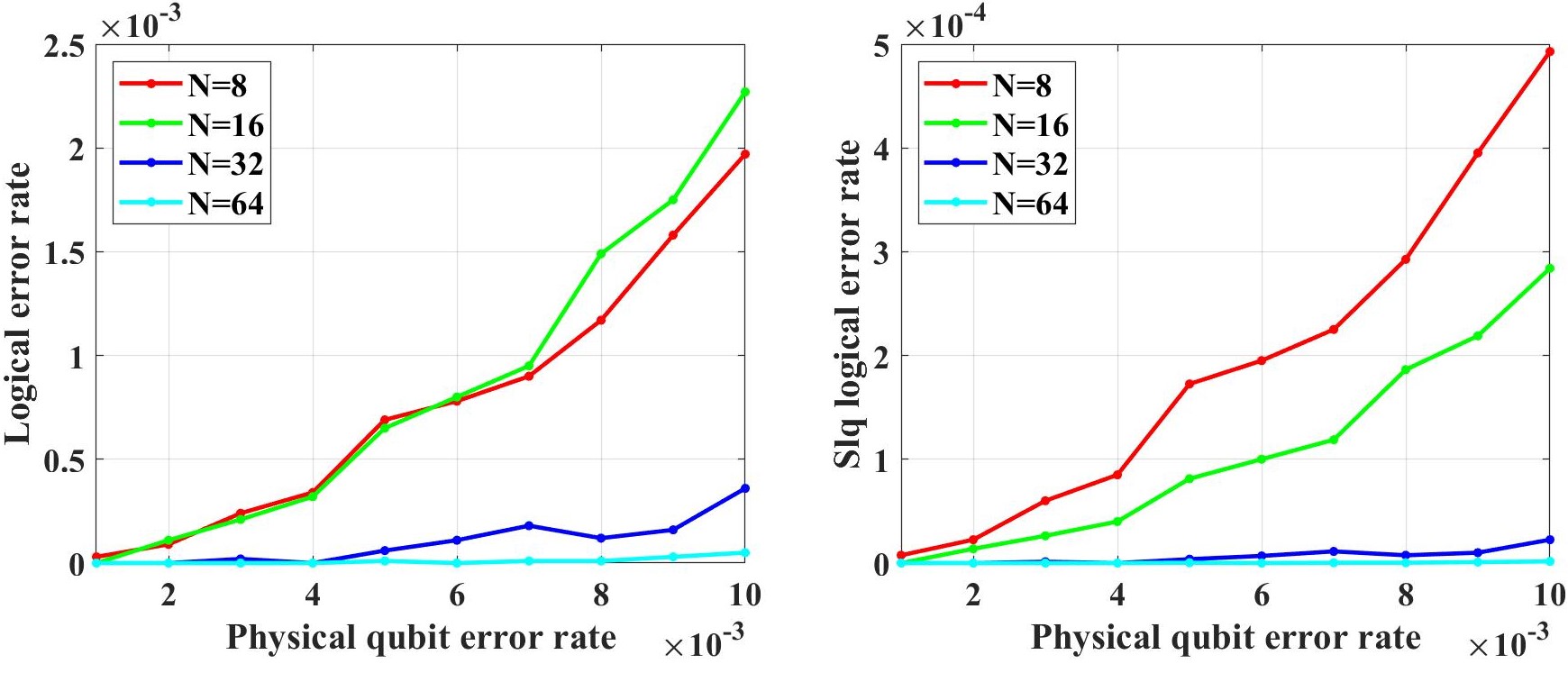}
	\caption{The LER with table-look-up decoder.}
	\label{fig14}
\end{figure*}

\begin{figure*}[htbp]
	\centering
	\includegraphics[width=0.7\textwidth]{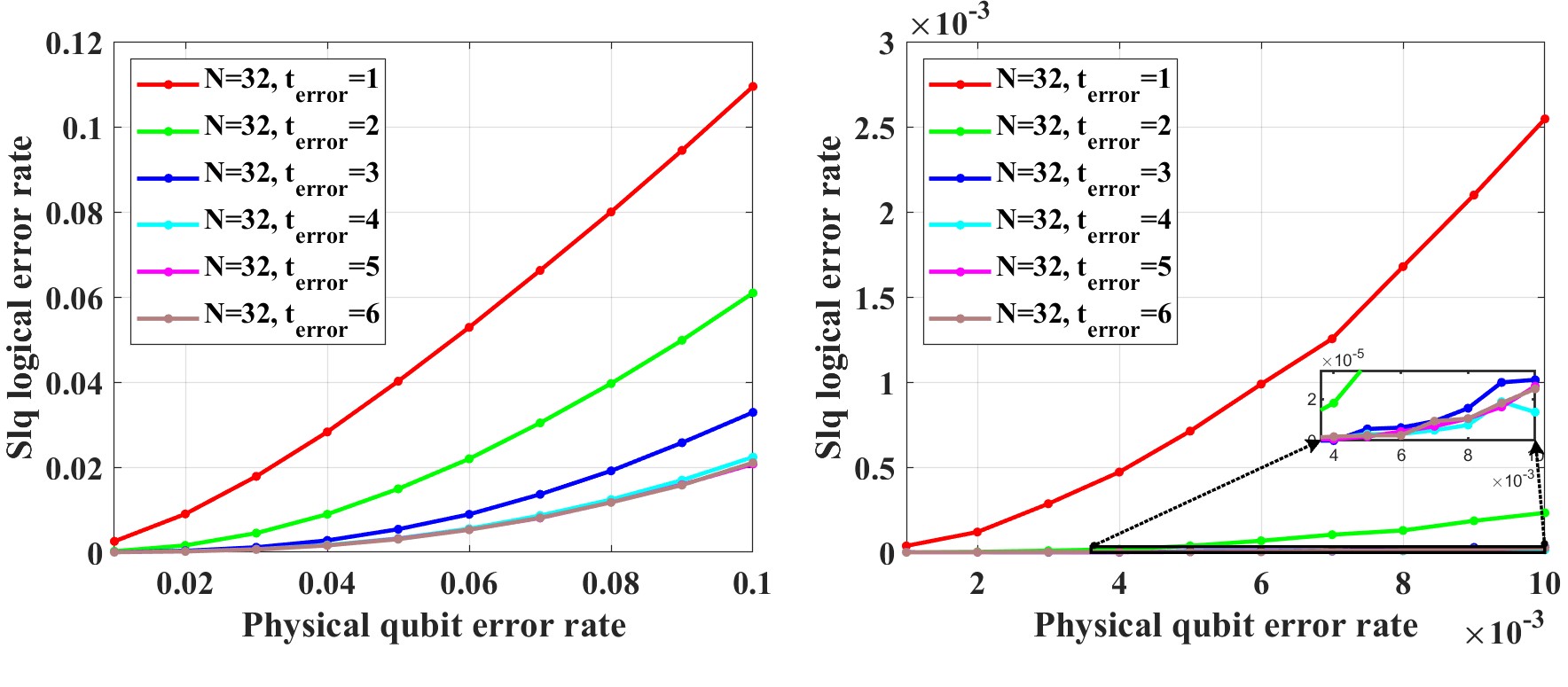}
	\caption{The decoding accuracy under different $t_{error}$.}
	\label{fig15}
\end{figure*}
The above attempts are made for pure Pauli X noise channel. It’s obvious that it also applies to pure Pauli Y noise. And simply replacing the Z operators by X operators can provide us  codes applies to pure Pauli Z noise. Whether it can be generalized to depolarizing channels is still unknown. To make this generalization, maybe one should modify the primal Tanner graph and the recursive expansion approach. However, even without this generalization, we can still obtain codes which applies to depolarizing channels by concatenating two codes – one corrects the Pauli X errors and the other corrects Pauli Z errors. The coding rate of the codes obtained by this concatenating is $1/(2\sqrt{N})$. Though the coding rate also tends to $0$ with $N$ tending to infinite, it is $\sqrt{N}$ times as much as surface codes under the same code length $N$. From the perspective of engineering realization, the greatest challenge might be the weight of the stabilizers of these codes, since the weight will increase with code length increasing. For the current hardware, the larger the weight of a stabilizer is, the lower the measurement accuracy of it will be, which will do harm to the error correction.

\section{Conclusion}
\label{7}
In this paper, we follow the intuition that one can design a quantum polar coding scheme for quantum computing based on polarization phenomenon of quantum channels arises from directly converting the classical polar coding circuits into quantum ones, and propose two possible and more reasonable stabilizer construction algorithms -- CA and BS algorithms -- to construct QSPCs. The CA construction is equivalent to the construction in Ref. \cite{9366784}. In the range of $p$ from $1 \times 10^{-5}$ to $1 \times 10^{-2}$, under both unreliable and reliable frozen qubits assumptions, no evidence shows that the LER of single logical qubit $LER_{lq}$ of both CA and BS can be decreased by increasing the code length and we cannot find the noise threshold, which means these QPSCs don't work for quantum computing. Therefore, we come to a conclusion that the above intuition is too naive to guide us to design quantum polar coding scheme for quantum computing. We also provide a possible future direction of designing quantum stabilizer codes with high coding rate by borrowing the idea of classical polar codes. Following this direction, we find a class of quantum stabilizer codes with constant coding rate 0.5 for pure Pauli X, Z and Y noise.

\section*{Acknowledgements}

This work was supported by the Colleges and Universities Stable Support Project of Shenzhen, China (No.GXWD20220817164856008), the Colleges and Universities Stable Support Project of Shenzhen, China (No.GXWD20220811170225001) and Harbin Institute of Technology, Shenzhen - SpinQ quantum information Joint Research Center Project (No.HITSZ20230111).

\bibliographystyle{quantum}
\bibliography{mycitation}

\newpage
%\onecolumn
\appendix

\section{Bit-flip decoder}
\label{appendixA}
Here we give the pseudocode of bit-flip decoder, as shown in Algorithm \ref{algorithm3}. The corresponding flow chart is shown in Fig. \ref{fig16}.

\begin{figure}[htbp]
	\centering
	\includegraphics[width=0.5\textwidth]{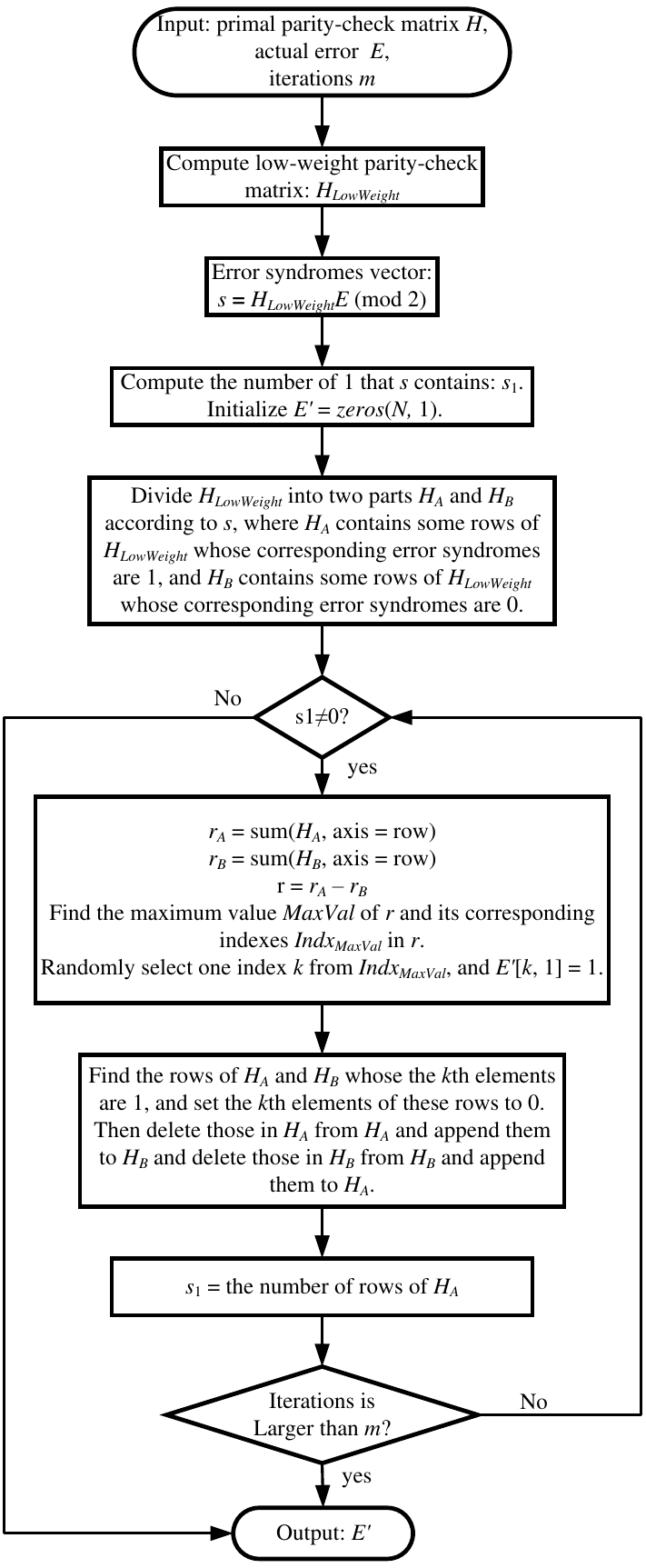}
	\caption{The flow chart corresponds to Algorithm \ref{algorithm3}}
	\label{fig16}
\end{figure}

\begin{algorithm}[htbp]
	\caption{Bit-flip decoder}
	\label{algorithm3}
	\LinesNumbered
	\KwIn{primal parity-check matrix $H$, actual error $E$, iterations $m$}
	\KwOut{estimated error $E^\prime$}
	Reduce the weight of each row of $H$ using the elementary row transformation of matrix, obtain new parity-check matrix $H_{LowWeight}$\;
	Error syndromes vector: $s=H_{LowWeight}E\ (mod\ 2)$\;
	Compute the number of 1 that $s$ contains: $s_1$. Initialize $E^\prime = zeros(N,1)$\;
	Divide $H_{LowWeight}$ into two parts $H_{A}$ and $H_{B}$ according to $s$, where $H_{A}$ contains some rows of $H_{LowWeight}$ whose corresponding error syndromes are 1, and $H_{B}$ contains some rows of $H_{LowWeight}$ whose corresponding error syndromes are 0\;
	\While{$s_1\neq 0$ or $m \neq 0$}{
		$r_A = sum(H_A,\ axis = row)$\;
		$r_B = sum(H_B,\ axis = row)$\;
		$r = r_A-r_B $\;
		Find the maximum value $MaxVal$ of $r$ and its corresponding indexes $Indx_{MaxVal}$ in $r$\;
		Randomly select one index $k$ from $Indx_{MaxVal}$\;
		$E^\prime[k,1]=1$\;
		\For{each row $H_A^i\in H_A$}{
			\If{$H_A^i[k]=1$}{
				$H_A^i[k]=0$\;
				Delete $H_A^i$ from $H_A$ and append $H_A^i$ to $H_B$\;
			}
		}
		\For{each row $H_B^j\in H_B$}{
			\If{$H_B^j[k]=1$}{
				$H_B^j[k]=0$\;
				Delete $H_B^j$ from $H_B$ and append $H_B^j$ to $H_A$\;
			}
		}
		$s_1 = the\ number\ of\ rows\ of\ H_A$\;
		$m = m-1$\;
		Output $E^\prime$\;
	}
\end{algorithm}

\end{document}